\definecolor{darkred}  {rgb}{0.5,0,0}
\definecolor{darkblue} {rgb}{0,0,0.5}
\definecolor{darkgreen}{rgb}{0,0.5,0}
\newcommand{\qw}[1][-1]{\ar @{-} [0,#1]}
\newcommand{\cw}[1][-1]{\ar @{=} [0,#1]}
\newcommand{\gate}[1]{*{\xy *+<.6em>{#1};p\save+LU;+RU **\dir{-}\restore\save+RU;+RD **\dir{-}\restore\save+RD;+LD **\dir{-}\restore\POS+LD;+LU **\dir{-}\endxy} \qw}
\newcommand{\measureD}[1]{*{\xy*+=+<.5em>{\vphantom{\rule{0em}{.1em}#1}}*\cir{r_l};p\save*!R{#1} \restore\save+UC;+UC-<.5em,0em>*!R{\hphantom{#1}}+L **\dir{-} \restore\save+DC;+DC-<.5em,0em>*!R{\hphantom{#1}}+L **\dir{-} \restore\POS+UC-<.5em,0em>*!R{\hphantom{#1}}+L;+DC-<.5em,0em>*!R{\hphantom{#1}}+L **\dir{-} \endxy} \qw}
\newcommand{\rstick}[1]{*!L!<-.5em,0em>=<0em>{#1}}
\newcommand{\lstick}[1]{*!R!<.5em,0em>=<0em>{#1}}
\newcommand{\ustick}[1]{*!D!<0em,-.5em>=<0em>{#1}}
\newcommand{\Qcircuit}[1][0em]{\xymatrix @*=<#1>}
\newcommand{\cprepareC}[1]{*{\xy*+=+<.5em>{\vphantom{#1\rule{0em}{.1em}}}*\cir{l^r};p\save*!L{#1} \restore\save+UC;+UC+<.5em,0em>*!L{\hphantom{#1}}+R **\dir{-} \restore\save+DC;+DC+<.5em,0em>*!L{\hphantom{#1}}+R **\dir{-} \restore\POS+UC+<.5em,0em>*!L{\hphantom{#1}}+R;+DC+<.5em,0em>*!L{\hphantom{#1}}+R **\dir{-} \endxy} \cw}
\newcommand{\lin}[1]{\mathcal{L}(#1)}
\newcommand{\hilb}[1]{\mathcal{#1}}
\newcommand{\id}{\operatorname{id}}
\newcommand{\pp}{\operatorname{Pos}}
\newcommand{\diag}{\operatorname{diag}}
\newcommand{\ch}[1]{\mathcal{#1}}
\newcommand{\openone}{\mathds{1}}
\newtheorem{dfn}{Definition}
\newtheorem{lmm}{Lemma}
\newtheorem{prop}{Proposition}
\begin{document}
\title{Explicit  construction   of  optimal   witnesses  for
  input--output correlations attainable by quantum channels}

\date{\today}

\author{
  Michele  Dall'Arno,  Sarah Brandsen,  and  Francesco
  Buscemi

  \thanks{M.~D.  acknowledges support from MEXT Quantum Leap
    Flagship      Program      (MEXT      Q-LEAP)      Grant
    No.   JPMXS0118067285,   JSPS   KAKENHI   Grant   Number
    JP20K03774,  and  the  International  Research  Unit  of
    Quantum    Information,    Kyoto   University.     F.~B.
    acknowledges  support from  MEXT  Quantum Leap  Flagship
    Program (MEXT Q-LEAP), Grant Number JPMXS0120319794, and
    from  the Japan  Society  for the  Promotion of  Science
    (JSPS)  KAKENHI,  Grants   Number  19H04066  and  Number
    20K03746.}

  \thanks{M.   Dall'Arno is  with the  Yukawa Institute  for
    Theoretical  Physics,  Kyoto  University,  Kitashirakawa
    Oiwakecho, Sakyoku, Kyoto 606-8502,  Japan, and with the
    Faculty of  Education and Integrated Arts  and Sciences,
    Waseda University, 1-6-1 Nishiwaseda, Shinjuku-ku, Tokyo
    169-8050, Japan (dallarno.michele@yukawa.kyoto-u.ac.jp)}

  \thanks{S.  Brandsen  is with  the Department  of Physics,
    Duke  University,  Durham,  North  Carolina  27708,  USA
    (sarah.brandsen@duke.edu)}

  \thanks{F.   Buscemi  is  with   the  Graduate  School  of
    Informatics,  Nagoya   University,  Chikusa-ku,  Nagoya,
    464-8601, Japan (buscemi@i.nagoya-u.ac.jp)}

  \thanks{Report number: YITP-20-110}
}

\maketitle

\begin{abstract}
  Given a  quantum channel---that is, a  completely positive
  trace-preserving  linear map---as  the only  communication
  resource available  between two  parties, we  consider the
  problem  of  characterizing  the set  of  classical  noisy
  channels that can be obtained from it by means of suitable
  classical-quantum    encodings    and    quantum-classical
  decodings,   respectively,  on   the   sender's  and   the
  receiver's side.   We consider  various classes  of linear
  witnesses and compute their  optimum values in closed form
  for several  classes of  quantum channels.   The witnesses
  that    we    consider     here    are    formulated    as
  \textit{communication games},  in which Alice's aim  is to
  exploit a  single use of  a given quantum channel  to help
  Bob  guess  some  information  she has  received  from  an
  external referee.
\end{abstract}

\section{Introduction}

Suppose Alice   and   Bob  play   a
two-party, quantum--enhanced version of the  popular game charades.  In
each run,  Alice's aim is  to help  Bob guess some  piece of
information that she has received from a referee.  As in the
traditional  charades game,  there  is a  bottleneck in  the
communication channel, in this case created by a given noisy
quantum channel.   After each round, the  referee provides a
payoff  that  depends  on  both the  information  Alice  was
provided and Bob's  guess.  The parties' aim  is to maximize
the average  payoff, which depends only on the  channel and
the game, by optimizing Alice's encoding and Bob's decoding.

Here, we  introduce the  communication utility of  any given
quantum  channel for  any  given communication  game as  the
average payoff after asymptotically many runs. Communication
games are  linear functionals (i.e.,  \textit{witnesses}) on
the set  of classical noisy channels  (i.e., \textit{quantum
  signaling  correlations}) that  can be  obtained from  the
given quantum  channel, and the  corresponding communication
utility  constitutes  the  optimal  value  for  any  such  a
witness.

For any given channel and game, the problem of computing the
communication  utility,  as  well as  the  encoding-decoding
achieving  it, can  be generally  framed as  a semi-definite
programming  problem.  However,  here we  are interested  in
those cases  where a  closed-form solution is  possible.  To
this aim, we restrict to the following classes of games:
\begin{itemize}
\item  {\em Unbiased  games},  where any  of Bob's  possible
  outcomes generates the same average payoff;
\item  {\em  Discrimination games}  where  the  payoff is  a
  diagonal matrix;
\item  {\em   Binary-output  games}  where   Bob has two  possible
  outcomes;
\item  {\em Binary-input-output  games}  where Alice has two possible inputs and Bob has two possible outcomes.
\end{itemize}

For any arbitrary game,  we derive the communication utility
of  any   unitary,  trace-class,  erasure,   dephasing,  and
quantum-classical channel, generalizing  a result by Frenkel
and Weiner~\cite{FW13} that applies to the identity channel.
For any  unbiased game, we derive  the communication utility
of  any  depolarizing channel.   We  apply  our result  for
unitary  channels  to  the  case  of  discrimination  games,
providing  a  simplified proof  of  a  result by  Elron  and
Eldar~\cite{EE07}. We  show that  any binary-output  game is
either  trivial,  or  equivalent  to  a  binary-input-output
discrimination game. For  any such game, we  show that the
optimal  encoding consists  of two  orthogonal pure  states,
regardless  of  whether  the  channel  is  commutativity
preserving.    Using  these   facts,   we  extend   previous
results~\cite{DBB17,  BD19}  to   derive  the  communication
utility  of  any  Pauli, amplitude-damping,  optimal  1-to-2
universal cloning, and  shifted-depolarizing channel for any
binary-output game.

The     paper    is     structured    as     follows.     In
Section~\ref{sec:utility}  we   formalize  the   problem  of
evaluating  the communication  utility of  quantum channels.
We   address  such   a  problem   for  arbitrary   games  in
Section~\ref{sec:arbitrary}.  We  specialize our  results to
the  cases of  unbiased,  discrimination, and  binary-output
games             in            Sections~\ref{sec:unbiased},
\ref{sec:discrimination},        and       \ref{sec:binary},
respectively.  Finally, we summarize our results and discuss
some outlooks in Section~\ref{sec:conclusion}.

\section{Utility of quantum channels}
\label{sec:utility}

We recall some standard  facts from quantum information
theory~\cite{Wil13}.  Any quantum  system is associated
with  a Hilbert  space $\hilb{H}$,  and we  denote with
$\lin{\hilb{H}}$  the  space  of  linear  operators  on
$\hilb{H}$.   Any   quantum  state  in   $\hilb{H}$  is
represented       by       a       density       matrix
$\rho   \in   \lin{\hilb{H}}$,    namely   a   positive
semidefinite   unit-trace    operator.    Any   quantum
transformation   from  $\hilb{H}$   to  $\hilb{K}$   is
represented       by       a      quantum       channel
$\ch{C}: \lin{\hilb{H}}  \to \lin{\hilb{K}}$,  namely a
completely-positive  trace-preserving linear  map.  Any
quantum measurement  on $\hilb{H}$ is represented  by a
positive-operator      valued       measure      (POVM)
$\{ \pi_y  \in \lin{\hilb{H}}  \}$, namely a  family of
positive-semidefinite      operators     such      that
$\sum_y \pi_y = \openone$, where $\openone$ denotes the
identity operator.

The  communication  setup  under consideration  can  be
framed as a  quantum game played by  two parties, Alice
and Bob,  and an  external referee.  Prior  to starting
the game, the parties are allowed to establish a common
strategy.  In each run, the following occurs:
\begin{enumerate}
\item The  referee gives  as an input  to Alice  the value
  $x \in  [1, n]$ of  a random variable $X$  with prior
  probability $p_x$,  known in advance to  the parties;
\item  Alice and  Bob  are allowed  to perform  one-way
  communication over a single  use of a quantum channel
  $\ch{C}$ from  Alice to  Bob;
\item   Bob   returns   to  the   referee   the   value
  $y \in [1, m]$ of a random variable $Y$;
\item  The  referee  provides  a  payoff  according  to
  the function $u_{x,y}\in\mathbb{R}$, known in advance to the parties.
\end{enumerate}
  
In the  absence of any previously  shared resource, the
most general strategy allowed  by quantum theory is for
Alice  to encode  her input  $x$ into  a quantum  state
$\rho_x$,   and   for   Bob   to   decode   his   input
$\ch{C}(\rho_x)$ by means of a POVM $\{ \pi_y \}$.
The resulting setup is as follows
\vspace{0.3cm}
\begin{align*}
  \begin{aligned}
    \Qcircuit   @C=12pt  @R=10pt   {  &   \lstick{x}  &
      \ustick{[1,n]}   \cw   &   \cprepareC{\rho_x}   &
      \ustick{\hilb{H}}  \qw &  \gate{\ch{C}} &
      \ustick{\hilb{K}}  \qw  &  \measureD{\pi_y}  &
      \ustick{[1,m]} \cw & \rstick{y} \cw }
  \end{aligned}
\end{align*}
Then the expected average payoff is given by
\begin{align}
  \label{eq:avgpayoff}
  \langle u \rangle_{\{\rho_x\},  \{ \pi_y \}}  =   \sum_{x=1}^n  \sum_{y=1}^m   p_x
  \Tr[\ch{C}(\rho_x) \pi_y] u_{x,y}.
\end{align}
From  Eq.~\eqref{eq:avgpayoff}  it immediately  follows
that    two    games    with    the    same    quantity
\begin{align*}
  g_{x,y} := p_x u_{x,y}
\end{align*}
have the same  average payoff, hence $g$  identifies a class
of equivalence among games.  We  can now introduce a measure
of  how  useful a  channel  $\ch{C}$  is in  maximizing  the
average payoff of a game $g$.

\begin{dfn}[Communication utility]
  \label{def:utility}
  The {\em  communication utility}  $U(\ch{C}, g)$
  of a  quantum channel  $\ch{C}$ for game  $g$ is
  the  maximum over  any  encoding $\{  \rho_x \}$  and
  decoding  $\{   \pi_y  \}$  of  the   average  payoff
  $\langle u \rangle_{ \{\rho_x\}, \{\pi_y \} }$, namely
  \begin{align*}
    U(\ch{C}, g) =  \max_{\{\rho_x\}, \{\pi_y\}} \langle u \rangle_{ \{\rho_x\}, \{\pi_y \} }.
  \end{align*}
\end{dfn}

In the remaining of this  work we derive the utility of
several classes  of quantum  channels. For  clarity, we
provide a  glossary of the channels  considered in this
work  in  Tab.~\ref{tab:channels}.  We also  provide  a
summary      of      our      main      results      in
Tab.~\ref{tab:utilities}.

\begin{table}[h]
  \begin{tabular}{| >{$}m{.6\columnwidth}<{$} | >{$}m{.3\columnwidth}<{$} |}
    \hline
    {\textrm{\bf Definition}} &  {\textrm{\bf Name}} \\
    \hline
    \hline
    \ch{U}(\rho) := U \rho U^\dagger & \textrm{Unitary} \\
    \hline
    \ch{F}_\lambda(\rho) := \lambda \rho + (1-\lambda) \sum_k \bra{k} \rho \ket{k} \ket{k} \! \! \bra{k} & \textrm{Dephasing} \\
    \hline
    \ch{T}(\rho) := \Tr[\rho] \sigma & \textrm{Trace-class} \\
    \hline
    \ch{E}_\lambda(\rho) := \lambda \rho \oplus (1-\lambda) \Tr[\rho] \ket{\phi} \! \! \bra{\phi} & \textrm{Erasure} \\
    \hline
    \ch{M}(\rho) := \sum_y \Tr[ \rho \pi_y] \ket{y} \! \! \bra{y} & \textrm{QC} \\
    \hline
    \ch{D}_\lambda(\rho) := \lambda \rho + (1-\lambda) \Tr[\rho] \openone/d & \textrm{Depolarizinig} \\
    \hline
    \ch{P}_{\vec\lambda} (\rho) := \lambda_0 \rho + \sum_{k=1}^3 \lambda_k \sigma_k \rho \sigma_k & \textrm{Pauli} \\
    \hline
    \ch{A}_\eta \begin{pmatrix}
      1-\beta & \gamma \\
      \gamma^* & \beta
    \end{pmatrix} = \begin{pmatrix}
      1-\eta \beta & \sqrt{\eta}\gamma \\
      \sqrt{\eta}\gamma^* & \eta \beta
    \end{pmatrix} & \textrm{Amplitude damping} \\
    \hline
    \ch{S}_\lambda (\rho) := \lambda \rho + (1-\lambda)
    \Tr[\rho] \sigma & \textrm{Shifted depolarizing} \\
    \hline
    \ch{N}(\rho) := \frac2{d + 1} P_s (\rho \otimes
    \openone) P_s & \textrm{$1 \to 2$ Cloning} \\
    \hline
  \end{tabular}
  \caption{Glossary of  quantum channels  considered in
    this work.   All channels  are formally  defined in
    the  text.  Not  otherwise  specified channels  are
    denoted with $\ch{C}$ in the text.}
  \label{tab:channels}
\end{table}

\begin{table}[h]
  \begin{tabular}{| >{$}m{.05\columnwidth}<{$} | >{$}m{.25\columnwidth}<{$} | >{$}m{.6\columnwidth}<{$} |}
    \hline
    {\ch{C}} & {\textrm{\bf Game } g} & {\textrm{\bf Utility } U(\ch{C}, g)} \\
    \hline
    \hline
    {\bf \ch{F}_\lambda} & \textrm{Any} & U(\ch{U}, g) \\
    \hline
    {\bf \ch{T}} & \textrm{Any} & \max_y \sum_x  g_{x,y} \\
    \hline
    {\bf \ch{E}_\lambda} & \textrm{Any} & \lambda U(\ch{U}, g) + (1-\lambda) \max_{y} \sum_x g_{x,y} \\ 
    \hline
    {\bf \ch{M}} & \textrm{Any} & \sum_x \lambda_x \\
    \hline
    {\bf \ch{D}_\lambda} & \textrm{Unbiased} &  \lambda U(\ch{U}, g) + (1-\lambda) \sum_x g_{x,0} \\
    \hline
    {\bf \ch{U}} & \textrm{Discrimination} & \sum_{x=0}^{d-1} g_x \Theta(g_x ) \\
    \hline
    {\bf \ch{P}_{\vec\lambda}} & \textrm{Binary} & \max    \left(g_0, \frac{1+\max_{k \ge 1}|2(\lambda_0+\lambda_k)-1|}2 \right)\\
    \hline
    {\bf \ch{A}_\eta} & \textrm{Binary} & \frac{1 + \sqrt{1 - 4p(1-\eta) +4p^2 (1-\eta)}}2 \\
    \hline
    {\bf \ch{S}_\lambda} & \textrm{Binary} & \max\left[g_0, \frac{1 + \lambda + (1-\lambda)(1-2s_{d-1})(2g_0-1)}2 \right] \\
    \hline
    {\bf \ch{N}} & \textrm{Binary} & \frac{d+g_0}{d+1} \\
    \hline
  \end{tabular}
  \caption{Summary  of  our  main results,  namely  the
    utilities $U(\ch{C},  g)$ of  several channels
    $\ch{C}$ for several classes of games.}
  \label{tab:utilities}
\end{table}

\section{Utility for arbitrary games}
\label{sec:arbitrary}

In  this Section  we consider  arbitrary games.  Let us
begin with  some general  results. The  following Lemma
provides a  simple, channel-independent upper  bound to
the utility for any given game.

\begin{lmm}
  \label{thm:bound}
  For any  channel $\ch{C}$ and any  game $g$, the
  utility $U(\ch{C}, g)$ is upper bounded by
  \begin{align*}
    U(\ch{C}, g) \le \sum_x \max_y g_{x,y}.
  \end{align*}
\end{lmm}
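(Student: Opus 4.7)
The plan is to prove the bound directly from the definition by observing that, for every fixed input symbol $x$, the receiver's output distribution conditional on $x$ is a bona fide probability distribution, so summing a payoff against it cannot exceed the maximum payoff.

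First, I would start from the definition of the average payoff in Eq.~\eqref{eq:avgpayoff} and rewrite it in terms of the reduced quantity $g_{x,y}=p_x u_{x,y}$, obtaining
\begin{align*}
  \langle u \rangle_{\{\rho_x\},\{\pi_y\}} = \sum_{x=1}^n \sum_{y=1}^m g_{x,y} \Tr[\ch{C}(\rho_x) \pi_y].
\end{align*}
Next, I would fix an arbitrary encoding $\{\rho_x\}$ and POVM $\{\pi_y\}$, and observe that for each $x$ the numbers $q_{y|x}:=\Tr[\ch{C}(\rho_x)\pi_y]$ form a probability distribution over $y$: they are non-negative because $\ch{C}(\rho_x)$ is positive semidefinite and $\pi_y\ge 0$, and they sum to one because $\sum_y \pi_y=\openone$ and $\ch{C}$ is trace-preserving, so $\sum_y q_{y|x}=\Tr[\ch{C}(\rho_x)]=\Tr[\rho_x]=1$.

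The elementary inequality $\sum_y q_{y|x}\, g_{x,y} \le \max_y g_{x,y}$, valid for any probability distribution $q_{\cdot|x}$, then gives
\begin{align*}
  \langle u \rangle_{\{\rho_x\},\{\pi_y\}} = \sum_x \sum_y q_{y|x}\, g_{x,y} \le \sum_x \max_y g_{x,y}.
\end{align*}
Since the right-hand side is independent of the chosen encoding and decoding, taking the maximum on the left over $\{\rho_x\}$ and $\{\pi_y\}$ in accordance with Definition~\ref{def:utility} yields the claimed bound.

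There is no real obstacle here: the only properties invoked are trace-preservation of $\ch{C}$, positivity of states and POVM effects, and the normalization of the POVM, so the bound is channel-independent, as advertised.
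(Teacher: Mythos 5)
Your proof is correct and follows essentially the same route as the paper's: both arguments relax the quantum conditional probabilities $\Tr[\ch{C}(\rho_x)\pi_y]$ to an arbitrary probability distribution over $y$ for each fixed $x$ and then bound the resulting expectation by $\max_y g_{x,y}$. The only difference is that you spell out explicitly why $\Tr[\ch{C}(\rho_x)\pi_y]$ is a bona fide distribution (positivity plus trace preservation and POVM normalization), which the paper leaves implicit.
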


\begin{proof}
  By    replacing     the    conditional    probability
  $\Tr[\ch{C}(\rho_x)  \pi_y]$  with an  arbitrary
  conditional  probability  $p_{y|x}$  and  taking  the
  supremum  over  such conditional  probabilities,  one
  clearly has
  \begin{align*}
    U(\ch{C},  g)  \le  \sup_{p_{y|x}}  \sum_{x=1}^n  \sum_{y=1}^m
    p_{y|x} g_{x,y}.
  \end{align*}
  For   any   fixed   $x$,  the   optimal   probability
  distribution      $p_{y|x}$      is     given      by
  $p_{y|x}        =       \delta_{y,y^*}$,        where
  $y^* := \sup_y g_{x,y}$, therefore one has
  \begin{align*}
    \sup_{p_{y|x}} \sum_{x=1}^n  \sum_{y=1}^m p_{y|x}  g_{x,y} =
    \sum_x \sup_y g_{x,y},
  \end{align*}
  from which the statement immediately follows.
\end{proof}

The  following Lemma  characterizes a  subclass of  the
linear  transformations of  game  $g$  under which  the
utility  $U(\ch{C}, g)$  transforms linearly,  for
any channel $\ch{C}$.

\begin{lmm}
  \label{lmm:quasilinearity}
  For any  game $g$  consider the  game $g'$  such that
  $g'_{x,y}  := \alpha  (g_{x,y} +  \beta_x)$, for  any
  $\alpha \ge  0$ and some  $\{ \beta_x \}$.   Then for
  any channel $\ch{C}$ we have
  \begin{align*}
    U(\ch{C}, g') = \alpha \left[ U(\ch{C}, g) + \sum_x \beta_x \right].
  \end{align*}
  Moreover $U(\ch{C}, g)$ and $U(\ch{C}, g')$
  are attained by the same  encoding $\{ \rho_x \}$ and
  decoding $\{ \pi_y \}$.
\end{lmm}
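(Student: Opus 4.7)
The plan is to reduce everything to an identity that holds pointwise in the encoding-decoding pair, and then invoke monotonicity of the maximum under a nonnegative affine transformation.

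First, I would fix an arbitrary encoding $\{\rho_x\}$ and decoding $\{\pi_y\}$, and write out the average payoff for the transformed game $g'$ directly from Eq.~\eqref{eq:avgpayoff}:
\begin{align*}
\langle u \rangle_{\{\rho_x\},\{\pi_y\}}^{g'}
= \alpha \sum_{x,y} \Tr[\ch{C}(\rho_x)\pi_y]\, g_{x,y}
+ \alpha \sum_x \beta_x \sum_y \Tr[\ch{C}(\rho_x)\pi_y].
\end{align*}
The first term is simply $\alpha \langle u \rangle_{\{\rho_x\},\{\pi_y\}}^{g}$. For the second, I would use the defining properties of the setup: since $\{\pi_y\}$ is a POVM, $\sum_y \pi_y = \openone$, and since $\ch{C}$ is trace-preserving on a state $\rho_x$, $\Tr[\ch{C}(\rho_x)]=1$. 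Hence $\sum_y \Tr[\ch{C}(\rho_x)\pi_y]=1$ for every $x$, so the second term collapses to $\alpha \sum_x \beta_x$, a constant independent of the chosen encoding and decoding.

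The punchline is then that for every admissible $(\{\rho_x\},\{\pi_y\})$,
\begin{align*}
\langle u \rangle_{\{\rho_x\},\{\pi_y\}}^{g'} = \alpha\left[ \langle u \rangle_{\{\rho_x\},\{\pi_y\}}^{g} + \sum_x \beta_x\right].
\end{align*}
Because $\alpha \ge 0$, the right-hand side is a nondecreasing affine function of $\langle u \rangle^{g}$, so taking the maximum on both sides over $(\{\rho_x\},\{\pi_y\})$ commutes with the transformation and yields
\[
U(\ch{C},g') = \alpha\left[ U(\ch{C},g) + \sum_x \beta_x\right].
\]
Moreover, any pair that attains the maximum for $g$ attains it for $g'$ and vice versa, which gives the second part of the statement.

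There is no real obstacle here: the argument is essentially bookkeeping combined with the two structural properties (POVM completeness and trace preservation of $\ch{C}$). The only subtle point worth stating explicitly is the nonnegativity of $\alpha$, which is what permits the maximum to pass through the affine transformation without flipping into a minimum; if $\alpha$ were negative the same identity would still hold pointwise but the conclusion for $U$ would need to be restated.
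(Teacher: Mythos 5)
Your proof is correct and follows essentially the same route as the paper's: expand the payoff for $g'$, use POVM completeness together with trace preservation to reduce the $\beta_x$ contribution to the constant $\sum_x \beta_x$, and pass the maximum through the nonnegative affine transformation. Your explicit remark on why $\alpha \ge 0$ is needed is a nice touch the paper leaves implicit, but the argument is the same.
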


\begin{proof}
  By definition~\ref{def:utility} one immediately has
  \begin{align*}
    U(\ch{C}, g')  := \alpha \sup_{ \{\rho_x\}, \{\pi_y\}}  \sum_{x,y} 
    \Tr[\ch{C}(\rho_x) \pi_y] (g_{x,y} + \beta_x).
  \end{align*}
  Since  POVMs  are  decompositions  of  the  identity,
  namely $\sum_y  \pi_y =  \openone$, and  channels are
  trace-preserving,                              namely
  $\Tr[\ch{C}(\rho_x)] = \Tr[\rho_x] = 1$, one has
  \begin{align*}
    U(\ch{C},  g')  =  \alpha  \left[  \sup_{\rho_x,  \pi_y}
      \sum_{x,y} \Tr[\ch{C}(\rho_x) \pi_y]  g_{x,y} + \sum_x
      \beta_x \right],
  \end{align*}
  for   any   encoding   $\{  \rho_x   \}$   and   POVM
  $\{ \pi_y \}$.  Since only  the first term depends on
  the   encoding   $\{    \rho_x   \}$   and   decoding
  $\{  \pi_y  \}$,  one  has that  $U(\ch{C},  g)$  and
  $U(\ch{C}, g')$ are attained by the same encoding and
  decoding.
\end{proof}

As  an immediate  consequence  of a  recent breakthrough  by
Frenkel and Weiner~\cite{FW13}, the  utility of any identity
quantum channel,  that we denote  by $\id$, is equal  to the
utility  of  any  identity  classical channel  in  the  same
dimension.   Accordingly, the  utility  of  any unitary  and
dephasing  channel  is also  equal  to  the utility  of  the
identity  classical  channel  in   the  same  dimension,  as
follows.

\begin{dfn}[Unitary channel]
  The     action     of     any     unitary     channel
  $\ch{U} : \lin{\hilb{H}}  \to \lin{\hilb{H}}$ on
  any  state  $\rho  \in \lin{\hilb{H}}$  is  given  by
  $\ch{U}(\rho)  =  U  \rho U^\dagger$,  for  some
  unitary $U$.
\end{dfn}

\begin{prop}[Frenkel and Weiner~\cite{FW13}]
  \label{thm:frenkel}
  The  utility  $U(\ch{U},   g)$  of  any  unitary
  channel $\ch{U}$ for any game $g$ is attained by
  an orthonormal encoding $\{\rho_x\}$ and a projective
  decoding  $\{  \pi_y   \}$  that  are  simultaneously
  diagonalizable.
\end{prop}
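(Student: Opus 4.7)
The plan is to reduce the problem to the identity channel via unitary invariance, then apply the Frenkel--Weiner correlation equivalence as a black box to recast the optimization as a linear program over classical stochastic polytopes whose vertices will encode the claimed structure.

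\emph{Reduction to the identity channel.} For any encoding $\{\rho_x\}$ and any POVM $\{\pi_y\}$, cyclicity of the trace gives
\[
\Tr[\ch{U}(\rho_x)\pi_y] = \Tr[\rho_x\,\tilde\pi_y], \qquad \tilde\pi_y := U^\dagger \pi_y U,
\]
and $\{\pi_y\}\mapsto\{\tilde\pi_y\}$ is a bijection on the set of POVMs. Hence $U(\ch{U},g) = U(\id,g)$, and any optimizer $(\{\rho_x\},\{\tilde\pi_y\})$ for the identity channel lifts, via $\pi_y = U\tilde\pi_y U^\dagger$, to an optimizer for $\ch{U}$, with orthonormality and projectivity preserved by the rotation.

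\emph{Frenkel--Weiner and extreme points.} By the cited Frenkel--Weiner result, the set of correlations $\{\Tr[\rho_x\tilde\pi_y]\}$ attainable through the $d$-dimensional quantum identity channel coincides with the set of correlations of the form $p(y|x) = \sum_{a=1}^d E_{a|x} D_{y|a}$ attainable through a $d$-symbol classical identity channel with row-stochastic $E$ and $D$. Therefore
\[
U(\id,g) = \max_{E,D\text{ stochastic}} \sum_{x,y,a} g_{x,y}\, E_{a|x}\, D_{y|a}.
\]
The objective is separately linear in $E$ and in $D$, so the maximum over each compact stochastic polytope is attained at an extreme point, which is deterministic: $E_{a|x}=\delta_{a,a(x)}$ and $D_{y|a}=\delta_{y,y^\star(a)}$ for some functions $a:[1,n]\to[1,d]$ and $y^\star:[1,d]\to[1,m]$.

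\emph{Quantum interpretation.} These deterministic classical strategies translate into the quantum encoding $\rho_x = \ket{a(x)}\!\bra{a(x)}$, namely rank-one projectors drawn from the orthonormal basis $\{\ket{a}\}_{a=1}^d$, and the quantum decoding $\tilde\pi_y = \sum_{a:\,y^\star(a)=y}\ket{a}\!\bra{a}$, a projective POVM diagonal in the same basis. Both are thus simultaneously diagonalizable, establishing the statement for $\id$; the general unitary case then follows by applying the inverse rotation from the first step. I expect the principal obstacle to be the Frenkel--Weiner step itself, which is taken as a black box here; the remainder of the argument is essentially bookkeeping to check that the deterministic extreme points of classical stochastic polytopes translate precisely into the orthonormal encoding and projective decoding that share a common eigenbasis.
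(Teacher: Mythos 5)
Your proposal is correct and follows exactly the route the paper intends: the paper gives no explicit proof of this proposition, presenting it as an immediate consequence of the Frenkel--Weiner equivalence between $d$-level quantum and $d$-symbol classical correlations, and your argument (absorb the unitary into the POVM, invoke the equivalence, pass to deterministic vertices of the stochastic polytopes, and reinterpret them as a common-basis orthonormal encoding and projective decoding) is precisely the fleshed-out version of that claim. No gaps.
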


\begin{dfn}
  The     action    of     the    dephasing     channel
  $\ch{F}_\lambda:        \lin{\hilb{H}}       \to
  \lin{\hilb{H}}$         on          any         state
  $\rho \in \lin{\hilb{H}}$ is given by
  \begin{align*}
    \ch{F}_\lambda(\rho) := \lambda \rho +(1-\lambda) \sum_{k=1}^d
    \bra{k}\rho\ket{k} \ket{k} \! \! \bra{k},
  \end{align*}
  where $d := \dim\hilb{H}$ is the dimension of Hilbert
  space $\hilb{H}$ and $\{ \ket{k} \}$ is some o.n.b.
\end{dfn}

\begin{prop}[Frenkel and Weiner~\cite{FW13}]
  \label{thm:dephasing-utility}
  The  utility   $U(\ch{F}_\lambda,  g)$   of  any
  dephasing channel $\ch{F}_\lambda$  for any game
  $g$   is   attained   by  an   orthonormal   encoding
  $\{\rho_x\}$ and a projective  decoding $\{ \pi_y \}$
  along basis $\{ \ket{k} \}$.
\end{prop}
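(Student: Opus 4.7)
My plan is to exploit the convex decomposition
\begin{align*}
\ch{F}_\lambda = \lambda\,\id + (1-\lambda)\,\Delta,
\end{align*}
where $\Delta(\rho) := \sum_k \bra{k}\rho\ket{k}\,\ket{k}\!\bra{k}$ is the completely dephasing channel in the basis $\{\ket{k}\}$ and $\id$ denotes the identity channel of the same dimension. The average payoff in Eq.~\eqref{eq:avgpayoff} is linear in the channel, so for any encoding $\{\rho_x\}$ and decoding $\{\pi_y\}$ the payoff on $\ch{F}_\lambda$ splits as the convex combination of the payoffs on $\id$ and on $\Delta$. Taking the supremum term by term yields $U(\ch{F}_\lambda,g) \le \lambda\,U(\id,g) + (1-\lambda)\,U(\Delta,g)$. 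Since $\Delta = \Delta \circ \id$, any strategy for $\Delta$ can be emulated on $\id$ by absorbing $\Delta$ into the decoding POVM, so $U(\Delta,g) \le U(\id,g)$; the upper bound thus collapses to $U(\ch{F}_\lambda,g) \le U(\id,g) = U(\ch{U},g)$.

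For the matching lower bound I will invoke Proposition~\ref{thm:frenkel}: the optimum of $U(\id,g)$ is attained by an orthonormal encoding and a projective decoding that are simultaneously diagonal in \emph{some} basis $\{\ket{\phi_k}\}$. Conjugating both objects by any unitary $V$ satisfying $V\ket{\phi_k} = \ket{k}$ leaves the payoff on $\id$ invariant (because $\Tr[\rho_x \pi_y] = \Tr[V\rho_x V^\dagger\,V\pi_y V^\dagger]$), producing an equally optimal strategy diagonal in $\{\ket{k}\}$. Each such $\rho_x$ is a fixed point of $\Delta$, hence $\ch{F}_\lambda(\rho_x) = \rho_x$, and the payoff achieved on $\ch{F}_\lambda$ by this rotated strategy already equals $U(\id,g)$. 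Combined with the upper bound, this establishes both $U(\ch{F}_\lambda,g) = U(\ch{U},g)$ and the fact that the optimum is attained by an orthonormal encoding and a projective decoding along the specific basis $\{\ket{k}\}$.

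The only step that must be argued rather than glossed over is the basis change: Proposition~\ref{thm:frenkel} guarantees simultaneous diagonalizability only in \emph{some} basis, whereas here we must pin the optimum to $\{\ket{k}\}$. The unitary invariance of $\id$ handles this cleanly, and it is the main conceptual ingredient beyond the convex-decomposition argument.
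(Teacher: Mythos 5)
Your proof is correct, and it spells out exactly the argument the paper leaves implicit: the paper states Proposition~\ref{thm:dephasing-utility} without proof, asserting it as an ``accordingly'' consequence of the Frenkel--Weiner result, and the intended reasoning is precisely your convex decomposition $\ch{F}_\lambda = \lambda\,\id + (1-\lambda)\Delta$ together with the data-processing bound $U(\Delta,g)\le U(\id,g)$ and the observation that a strategy diagonal in $\{\ket{k}\}$ is a fixed point of $\ch{F}_\lambda$. The one step you rightly flag---rotating the simultaneously diagonalizable optimal strategy of Proposition~\ref{thm:frenkel} into the specific basis $\{\ket{k}\}$ via unitary invariance of the identity channel---is handled correctly and is the only nontrivial detail beyond what the paper takes for granted.
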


At the other side of the spectrum of channels there lie
the trace-class  channels, that  is those  channel that
cannot  convey any  information.  Hence,  their utility
corresponds  to a  trivial guessing  on Bob's  side, as
follows.

\begin{dfn}[Trace-type channel]
  \label{dfn:trace-type}
  A channel $\ch{T}$ is  trace-type if and only if
  there   exists    a   state   $\sigma$    such   that
  $\ch{T}(\rho) = \Tr[\rho]  \sigma$ for any state
  $\rho$.
\end{dfn}

\begin{prop}
  For any game $g$,  the utility $U(\ch{T}, g)$ of
  any trace-type channel $\ch{T}$ is given by
  \begin{align*}
    U(\ch{T}, g) = \max_y \sum_x g_{x,y}.
  \end{align*}
  Any encoding is optimal,  and the optimal decoding is
  $\pi_y    =     \delta_{y,y^*}    \openone$,    where
  $y^* := \arg \max_y \sum_x g_{x,y}$.
\end{prop}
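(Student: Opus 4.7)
The plan is to exploit the fact that a trace-type channel ``erases'' its input: for any state $\rho_x$ we have $\ch{T}(\rho_x) = \sigma$, a fixed state independent of $x$. Substituting into the expression for the average payoff in Eq.~\eqref{eq:avgpayoff} (with $p_x u_{x,y}$ replaced by $g_{x,y}$) gives
\begin{align*}
  \langle u \rangle_{\{\rho_x\},\{\pi_y\}} = \sum_{x,y} \Tr[\sigma \pi_y]\, g_{x,y} = \sum_y q_y\, a_y,
\end{align*}
where I would set $q_y := \Tr[\sigma \pi_y]$ and $a_y := \sum_x g_{x,y}$. The first observation is that the dependence on the encoding $\{\rho_x\}$ has dropped out entirely, so any encoding is optimal.

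Next, I would note that the POVM condition $\sum_y \pi_y = \openone$ together with $\Tr[\sigma]=1$ implies $\sum_y q_y = 1$, and positivity of $\pi_y$ and $\sigma$ gives $q_y \ge 0$. Hence $\{q_y\}$ ranges over a subset of probability distributions on $[1,m]$, and the optimization reduces to maximizing $\sum_y q_y a_y$ over such distributions. The maximum of a linear functional over the probability simplex is simply $\max_y a_y$, achieved by putting all mass on $y^* := \arg\max_y a_y$.

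To conclude, I would verify that the prescribed decoding $\pi_y = \delta_{y,y^*}\openone$ is a valid POVM and attains the bound: it trivially satisfies $\sum_y \pi_y = \openone$, and it yields $q_{y^*} = \Tr[\sigma] = 1$, so the average payoff equals $a_{y^*} = \max_y \sum_x g_{x,y}$. Since the upper bound and the achieving strategy agree, this establishes the claimed value of $U(\ch{T},g)$. There is no real obstacle here — the proof is a one-line reduction once one observes that $\ch{T}(\rho_x)$ is constant in $x$; the only mild care needed is to check that $\{q_y\}$ genuinely exhausts the probability simplex (or at least that the extremal distributions $\delta_{y,y^*}$ are reachable by valid POVMs), which is immediate from the choice $\pi_{y^*}=\openone$.
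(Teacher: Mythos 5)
Your proof is correct and follows the same (and only natural) route as the paper, which simply states that the result follows directly from the definition of a trace-type channel; you have merely filled in the details of that one-line argument. The reduction to a linear optimization over $q_y = \Tr[\sigma\pi_y]$ and the verification that $\pi_{y^*}=\openone$ attains the bound are exactly what the paper leaves implicit.
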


\begin{proof}
  The      statement     directly      follows     from
  Definition~\ref{dfn:trace-type}.
\end{proof}

Between  unitary  and  trace-class channels  are
erasure    channels,    that     is,    channels    that
probabilistically  declare  an  error  while  otherwise
achieving noiseless  communication.  Accordingly, their
utility is the  convex combination of the  utility of a
noiseless  channel   and  a  trace-class   channel,  as
follows.

\begin{dfn}[Erasure channel]
  The     action     of     the     erasure     channel
  $\ch{E}_\lambda(\rho)   :   \lin{\hilb{H}}   \to
  \lin{\hilb{H}   \oplus   \hilb{K}}$  on   any   state
  $\rho \in \lin{\hilb{H}}$ is given by
  \begin{align*}
    \ch{E}_\lambda(\rho)  :=  \lambda  \rho  \oplus  (1-\lambda)
    \Tr[\rho] \ket{\phi} \! \! \bra{\phi},
  \end{align*}
  where $\ket{\phi} \in \hilb{K}$.
\end{dfn}

\begin{prop}
  For      any      game     $g$,      the      utility
  $U(\ch{E}_\lambda,   g)$   of  erasure   channel
  $\ch{E}_\lambda$ is given by
  \begin{align*}
    U(\ch{E}_\lambda, g)  = \lambda U(\id, g)  + (1-\lambda)
    U(\ch{T}, g).
  \end{align*}
  By denoting with $\{ \rho_x^*  \}$ and $\{ \pi_y^* \}$ any
  encoding and decoding attaining  $U(\id, g)$, the encoding
  $\{  \rho_x^*  \}$ and  the  decoding  $\{ \pi_y^*  \oplus
  \delta_{y,y^*}     \openone_{\hilb{K}}      \}$     attain
  $U(\ch{E}_\lambda,  g)$, where  $y^* :=  \arg\max_y \sum_x
  g_{x,y}$.
\end{prop}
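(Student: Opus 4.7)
The plan is to exploit the block-diagonal structure of the erasure channel's output to decouple the optimization over Alice's encoding and Bob's decoding into two independent subproblems, one per summand of $\hilb{H}\oplus\hilb{K}$.

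First, I would parametrize any POVM $\{\pi_y\}$ on $\hilb{H}\oplus\hilb{K}$ in block form, writing $\pi_y = \bigl(\begin{smallmatrix} A_y & B_y \\ B_y^\dagger & C_y\end{smallmatrix}\bigr)$. Since $\ch{E}_\lambda(\rho_x) = \lambda\rho_x \oplus (1-\lambda)\ket{\phi}\!\bra{\phi}$ is block-diagonal, the off-diagonal parts $B_y$ never enter the trace, and one obtains
\begin{align*}
\Tr[\ch{E}_\lambda(\rho_x)\pi_y] = \lambda\Tr[\rho_x A_y] + (1-\lambda)\bra{\phi}C_y\ket{\phi}.
\end{align*}
The normalization $\sum_y \pi_y = \openone_{\hilb{H}\oplus\hilb{K}}$ is equivalent to $\{A_y\}$ being a POVM on $\hilb{H}$ and $\{C_y\}$ being a POVM on $\hilb{K}$, chosen independently; the choice $B_y=0$ (i.e.\ $\pi_y = A_y\oplus C_y$) is always admissible and suffices.

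Next I would substitute this decomposition into the expected payoff, obtaining
\begin{align*}
\langle u\rangle = \lambda\sum_{x,y} g_{x,y}\Tr[\rho_x A_y] + (1-\lambda)\sum_y \Bigl(\sum_x g_{x,y}\Bigr)\bra{\phi}C_y\ket{\phi}.
\end{align*}
The two terms depend on disjoint degrees of freedom: the first on $\{\rho_x\}$ and $\{A_y\}$, the second only on $\{C_y\}$. Hence the supremum factorizes. Maximizing the first term is exactly the definition of $U(\id,g)$, attained by $\{\rho_x^*\}$ and $\{\pi_y^*\}=\{A_y^*\}$. For the second term, setting $q_y := \bra{\phi}C_y\ket{\phi}$ yields a probability distribution over $y$, so the supremum equals $\max_y\sum_x g_{x,y} = U(\ch{T},g)$ by the preceding Proposition, and is achieved by $C_y = \delta_{y,y^*}\openone_{\hilb{K}}$ with $y^* := \arg\max_y\sum_x g_{x,y}$.

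Combining the two independent optima gives $U(\ch{E}_\lambda,g) = \lambda U(\id,g) + (1-\lambda)U(\ch{T},g)$, achieved by the encoding $\{\rho_x^*\}$ and the decoding $\{\pi_y^* \oplus \delta_{y,y^*}\openone_{\hilb{K}}\}$, as claimed. The only mildly nontrivial step is the first one, verifying that restricting to block-diagonal POVMs loses no generality; once that observation is in place the rest is a direct calculation.
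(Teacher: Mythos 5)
Your proposal is correct and follows essentially the same route as the paper: the paper decouples the two terms by noting that $\Tr[\rho_x\pi_y]=\Tr[\rho_x P_{\hilb{H}}\pi_y P_{\hilb{H}}]$ and $\bra{\phi}\pi_y\ket{\phi}=\bra{\phi}P_{\hilb{K}}\pi_y P_{\hilb{K}}\ket{\phi}$, which is exactly your block-diagonal reduction in different notation, and then identifies the two independent maxima with $U(\id,g)$ and $U(\ch{T},g)$. Your phrasing of the normalization condition as an ``equivalence'' glosses over the positivity constraint coupling the blocks through $B_y$, but since you only need that the diagonal blocks of any POVM form POVMs and that $B_y=0$ is admissible, the argument is sound.
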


\begin{proof}
  By direct computation one has
  \begin{align*}
    & U(\ch{E}_\lambda, g) \\
    = & \max_{\{\rho_x\}, \{\pi_y\}} \sum_{x,y} \Big[ \lambda \Tr[\rho_x
        \pi_y] + (1-\lambda) \bra{\phi} \pi_y \ket{\phi} \Big] g_{x,y} \\
    \le  & \lambda  \max_{\{\rho_x\},  \{\pi_y\}} \sum_{x,y}  \Tr[\rho_x
           \pi_y]   g_{x,y}  \\ & +   (1-\lambda)  \max_{\{\pi_y\}}   \sum_{x,y}
                                  \bra{\phi} \pi_y \ket{\phi} g_{x,y},
  \end{align*}
  where     the    maxima     are    over     encodings
  $\{  \rho_x  \in  \lin{\hilb{H}}  \}$  and  decodings
  $\{  \pi_y \in  \lin{\hilb{H}  \oplus \hilb{K}}  \}$.
  Since
  $\Tr[\rho_x  \pi_y]  =  \Tr[\rho_x  P_\hilb{H}  \pi_y
  P_{\hilb{H}}]$                                    and
  $\bra{\phi}    \pi_y    \ket{\phi}    =    \bra{\phi}
  P_{\hilb{K}}  \pi_y  P_{\hilb{K}} \ket{\phi}$,  where
  $P_{\hilb{H}}$ and $P_{\hilb{K}}$  are the projectors
  on   Hilbert   spaces   $\hilb{H}$   and   $\hilb{K}$
  respectively, w.l.o.g.   the first and  second maxima
  in  the  last  step   can  be  taken  over  decodings
  $\{      \pi_y     \in      \lin{\hilb{H}}\}$     and
  $\{ \pi_y \in  \lin{\hilb{K}} \}$ respectively.  Then
  by Def.~\ref{def:utility}  for the first  maximum one
  has
  \begin{align*}
    \max_{\rho_x,   \pi_y}   \sum_{x,y}  \Tr[\rho_x   \pi_y]
    g_{x,y} =: U(\id, g),
  \end{align*}
  and  the second  maximum is  trivially achieved  when
  $\pi_y  = \delta_{y,y^*}  \openone_{\hilb{K}}$, where
  $y^* := \arg\max_y \sum_x g_{x,y}$, namely
  \begin{align*}
    \max_{\pi_y} \sum_{x,y} \bra{\phi}  \pi_y \ket{\phi} g_{x,y}
    = \max_y \sum_x g_{x,y}.
  \end{align*}

  By explicit computation the encodings $\{ \rho_x^* \}$ and
  decodings     $\{     \pi_y^*    \oplus     \delta_{y,y^*}
  \openone_{\hilb{K}} \}$ saturate this upper bound.
\end{proof}

We conclude our study of the utility of quantum channels for
arbitrary games  by considering  quantum-classical channels,
which  are used  to represent  the most  general demolishing
quantum measurement.

\begin{dfn}[Quantum-classical channel]
  The  action  of  the quantum-classical  (QC)  channel
  $\ch{M}: \lin{\hilb{H}} \to \lin{\hilb{K}}$ over
  any state $\rho \in \lin{\hilb{H}}$ is given by
  \begin{align*}
    \ch{M}(\rho) := \sum_y \Tr[\rho \pi_y] \ket{y}\! \! \bra{y},
  \end{align*}
  for some  POVM $\{  \pi_y \in \lin{\hilb{H}}  \}$ and
  some o.n.b. $\{ \ket{y} \in \hilb{K} \}$.
\end{dfn}

\begin{prop}
  \label{thm:qcchannel}
  For any game $g$,  the utility $U(\ch{M}, g)$ of
  QC channel $\ch{M}$ is given by
  \begin{align*}
    U(\ch{M}, g) = \max_S \sum_x \lambda_x,
  \end{align*}
  where $S$ is any deterministic stochastic matrix, that is,
  any  matrix with  entries  $0$ or  $1$  such that  $\sum_z
  S_{z,y} = 1$  for any $y$, and $\lambda_x$  is the largest
  eigenvalue of  $\sum_{y, z}  g_{x,z} S_{z, y}  \pi_y$.  If
  $\ket{\lambda_x}$  is the  corresponding eigenvector,  the
  optimal encoding is given  by $\rho_x = \ket{\lambda_x} \!
  \!  \bra{\lambda_x}$.
\end{prop}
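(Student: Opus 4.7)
The plan is to exploit the fact that the output of $\ch{M}$ is always diagonal in the fixed basis $\{\ket{y}\}$, so Bob's measurement can be reduced to a classical stochastic post-processing. First, I would write the average payoff explicitly for a generic encoding $\{\rho_x\}$ and decoding $\{\tilde\pi_z\}$ acting on $\hilb{K}$. Since $\ch{M}(\rho_x)=\sum_y \Tr[\rho_x \pi_y]\ket{y}\!\bra{y}$ is diagonal, only the diagonal entries of $\tilde\pi_z$ contribute, and I would define $S_{z,y}:=\bra{y}\tilde\pi_z\ket{y}$. The POVM condition $\sum_z \tilde\pi_z=\openone_{\hilb{K}}$ together with $\tilde\pi_z\ge 0$ immediately implies that $S$ is a column-stochastic matrix with entries in $[0,1]$, and conversely every such $S$ arises from some POVM $\{\tilde\pi_z\}$ (e.g.\ the diagonal one). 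Thus the decoding optimization reduces to an optimization over stochastic matrices.

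Substituting, the average payoff rewrites as
\begin{align*}
\langle u\rangle = \sum_x \Tr\!\left[\rho_x \, M_x(S)\right],
\qquad
M_x(S) := \sum_{y,z} g_{x,z}\,S_{z,y}\,\pi_y .
\end{align*}
Each $M_x(S)$ is Hermitian (since the $\pi_y$ are), and the maximization over the density matrix $\rho_x$ is independent for each $x$. The standard fact $\max_{\rho\ge 0,\,\Tr\rho=1}\Tr[\rho M]=\lambda_{\max}(M)$, attained at the top eigenvector, gives
\begin{align*}
\max_{\{\rho_x\}}\langle u\rangle = \sum_x \lambda_x(S),
\end{align*}
with optimal $\rho_x=\ket{\lambda_x}\!\bra{\lambda_x}$, as claimed.

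It remains to maximize $\sum_x \lambda_x(S)$ over column-stochastic $S$ and to show the optimum is attained at a deterministic $S$. I would argue this by convexity: the largest eigenvalue of a Hermitian matrix is a convex function of the matrix, and $M_x(S)$ depends linearly on $S$, so each $\lambda_x(S)$ is convex in $S$; the sum is therefore convex. The set of column-stochastic matrices is a convex polytope whose extreme points are precisely the deterministic stochastic matrices (matrices with a single $1$ in each column), so the maximum of a convex function over this polytope is attained at one of those extreme points. This yields $U(\ch{M},g)=\max_S \sum_x \lambda_x$ with $S$ restricted to deterministic stochastic matrices, as stated.

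The main obstacle I expect is the reduction step, i.e.\ justifying that restricting Bob's POVM to diagonal operators in the $\{\ket{y}\}$ basis loses no generality and that the resulting stochastic matrices exactly exhaust the feasible set; the rest is a short eigenvalue optimization together with a routine convexity-at-vertices argument.
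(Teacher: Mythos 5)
Your proposal is correct and follows essentially the same route as the paper: reduce Bob's decoding to a (column-)stochastic matrix $S$, optimize each $\rho_x$ to the top eigenvector of the Hermitian operator $\sum_{y,z} g_{x,z} S_{z,y}\pi_y$, and restrict to deterministic $S$ via an extreme-point argument. The only (harmless) difference is that the paper invokes linearity in $S$ for a fixed encoding to conclude determinism, whereas you first eliminate the encoding and then use convexity of $\sum_x\lambda_{\max}(M_x(S))$ over the stochastic polytope; your version is in fact more careful about the reduction from general POVMs on $\hilb{K}$ to stochastic matrices, which the paper simply asserts.
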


\begin{proof}
  Since  classical decodings  are represented  by stochastic
  matrices, due to the linearity  of the figure of merit the
  optimal   decoding    is   a    deterministic   stochastic
  matrix. Hence, by Def.~\ref{def:utility} one has
  \begin{align*}
    U(\ch{M}, g)  := &  \max_{\{\rho_x\}, S} \sum_{x,  y, z}
    g_{x,z} S_{z,y} \Tr \left[\rho_x  \pi_y \right] \\ \le &
    \max_S \sum_x \lambda_x,
  \end{align*}
  where  the inequality  is  saturated if  and only  if
  encoding   $\{   \rho_x   \}$    is   as   given   in
  Proposition~\ref{thm:qcchannel}.
\end{proof}

\section{Utility for unbiased games}
\label{sec:unbiased}

In this  Section we consider  unbiased games, which are games
where  any  of Bob's  possible  outcomes  generate the  same
average  payoff.    Due  to  Lemma~\ref{lmm:quasilinearity},
without loss of generality we can take such an average to be
zero.

\begin{dfn}[Unbiased game]
  We  call unbiased  game  any game  $g$  such that  $\sum_x
  g_{x,y} = 0$.
\end{dfn}

\begin{dfn}[Depolarizing channel]
  The    action    of    the    depolarizing    channel
  $\ch{D}_\lambda:        \lin{\hilb{H}}       \to
  \lin{\hilb{H}}$         on          any         state
  $\rho \in \lin{\hilb{H}}$ is given by
  \begin{align*}
    \ch{D}_\lambda(\rho) := \lambda \rho + (1-\lambda) \Tr[\rho]
    \frac{\openone}{d},
  \end{align*}
  where $d  := \dim\hilb{H}$ is  the dimension of  Hilbert space
  $\hilb{H}$.
\end{dfn}

\begin{prop}
  \label{thm:depolarizing}
  For    any   unbiased    game   $g$,    the   utility
  $U(\ch{D}_\lambda,   g)$  of   the  depolarizing
  channel $\ch{D}_\lambda$ is given by
  \begin{align*}
    U(\ch{D}_\lambda, g) = \lambda U(\id, g),
  \end{align*}
  The encoding  $\{ \rho_x  \}$ and  decoding $\{  \pi_y \}$
  attaining $U(\id, g)$ also attain $U(\ch{D}_\lambda, g)$.
\end{prop}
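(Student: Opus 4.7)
The plan is to substitute the explicit form of $\ch{D}_\lambda$ directly into the expression for the average payoff, split the resulting sum into two pieces by linearity, and use the unbiasedness assumption to eliminate the term coming from the depolarizing noise. Specifically, for any encoding $\{\rho_x\}$ and decoding $\{\pi_y\}$, I would write
\begin{align*}
\sum_{x,y} \Tr[\ch{D}_\lambda(\rho_x)\pi_y]\, g_{x,y}
= \lambda \sum_{x,y}\Tr[\rho_x \pi_y]\, g_{x,y}
+ \frac{1-\lambda}{d}\sum_y \Tr[\pi_y]\sum_x g_{x,y},
\end{align*}
using that $\Tr[\rho_x]=1$.

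Next, I would invoke the unbiased hypothesis $\sum_x g_{x,y}=0$, which annihilates the entire second term on the right-hand side, independently of the choice of decoding $\{\pi_y\}$. What remains is $\lambda \sum_{x,y}\Tr[\rho_x\pi_y]\,g_{x,y}$, which is exactly $\lambda$ times the objective function appearing in $U(\id,g)$. Taking the maximum over encodings and decodings on both sides then yields
\begin{align*}
U(\ch{D}_\lambda, g) = \lambda \max_{\{\rho_x\},\{\pi_y\}} \sum_{x,y}\Tr[\rho_x\pi_y]\,g_{x,y} = \lambda\, U(\id, g).
\end{align*}
Because the objective function for $\ch{D}_\lambda$ differs from that for $\id$ only by the positive scalar $\lambda$ and a constant (zero) term, the argmax is preserved, so any optimal pair $(\{\rho_x\},\{\pi_y\})$ for $U(\id,g)$ is also optimal for $U(\ch{D}_\lambda,g)$.

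There is no real obstacle here: the proof is a one-line computation once the depolarizing channel is expanded. The only subtlety worth flagging is that the separation of the maximum over the two summands is legitimate precisely because the second summand vanishes identically under the unbiasedness condition; otherwise one would only obtain the inequality used in the erasure-channel proof, rather than equality. This also explains why the simplification is specific to unbiased games and does not extend to arbitrary $g$.
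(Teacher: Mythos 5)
Your proposal is correct and follows essentially the same route as the paper's own proof: expand $\ch{D}_\lambda$ inside the payoff, split by linearity, kill the $\frac{1-\lambda}{d}\Tr[\pi_y]\sum_x g_{x,y}$ term using the unbiasedness condition, and take the maximum to obtain $\lambda\,U(\id,g)$. The remark about the argmax being preserved matches the paper's claim about the optimal encoding and decoding.
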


\begin{proof}
  By Def.~\ref{def:utility} one immediately has
  \begin{align*}
    & U(\ch{D}_\lambda,  g)  \\ := & \max_{\{\rho_x\}, \{ \pi_y\}}  \sum_{x,y}
    \left[    \lambda    \Tr[    \rho_x   \pi_y]    g_{x,y}    +
    \frac{1-\lambda}d \Tr[\pi_y] g_{x,y} \right].
  \end{align*}
  Since any  POVM is  a decomposition of  the identity,
  namely    $\sum_y    \pi_y     =    \openone$,    and
  $\sum_x g_{x,y} = 0$ for any $y$, one has
  \begin{align*}
    & \sum_{x,y} \left[ \lambda \Tr[ \rho_x \pi_y] g_{x,y} +
      \frac{1-\lambda}d  \Tr[\pi_y] g_{x,y}  \right] \\  = &
    \lambda \sum_{x,y} \Tr[\rho_x \pi_y] g_{x,y},
  \end{align*}
  for   any   encoding    $\{\rho_x\}$   and   decoding
  $\{\pi_y\}$. Then one has
  \begin{align*}
    \max_{\rho_x, \pi_y}  \lambda \sum_{x,y} \Tr\left[\rho_x
      \pi_y\right] g_{x,y} = \lambda U \left(\id, g\right).
  \end{align*}
  for any unitary channel $\ch{U}$.
\end{proof}

\section{Utility for discrimination games}
\label{sec:discrimination}

In this Section we  consider discrimination games, that
is games where the payoff is a diagonal matrix and thus
the numbers  of inputs and  outputs are equal,  that is
$m = n$.

\begin{dfn}[Discrimination game]
  \label{dfn:discrimination}
  We call  discrimination game  any game $g$  such that
  $g_{x,y}  =   \delta_{x,y}  g_x$,  for   some  $g_x$.
\end{dfn}
  
According to Def.~\ref{dfn:discrimination}, the utility
of  any channel  $\ch{C}$ for  discrimination game
$g$ is given by
\begin{align*}
  U(\ch{C}, g)  = \sup_{\{\rho_x\}, \{\pi_x\}} \sum_x  \Tr[ \ch{C}(\rho_x)
  \pi_x] g_x.
\end{align*}

We have shown in Proposition~\ref{thm:frenkel} that w.l.o.g.
the optimization of the utility  for any unitary channel can
be restricted to a finite set of encodings and decodings. In
the following we specify such  a result by deriving a closed
form for the  utility $U(\ch{U}, g)$ of  any unitary channel
$\ch{U}$ for any discrimination  game $g$.  This extends and
simplifies   the   proof   of   a  result   by   Elron   and
Eldar~\cite{EE07}.

\begin{prop}
  For   any  discrimination   game  $g$,   the  utility
  $U(\ch{U},   g)$   of    any   unitary   channel
  $\ch{U}$ is given by
  \begin{align*}
    U(\ch{U},  g) = \sum_{x=0}^{d-1}
    g_x \Theta(g_x),
  \end{align*}
  where           w.l.o.g.            we           take
  $g_0 \ge  g_1 \ge \dots \ge  g_{n-1}$ and $\Theta(x)$
  is the Heaviside step  function.
\end{prop}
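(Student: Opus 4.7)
The plan is to invoke Proposition~\ref{thm:frenkel} to reduce the continuous optimization over $\{\rho_x\}$ and $\{\pi_y\}$ to a combinatorial problem of choosing an assignment of inputs to basis labels and a partition of those labels among the outputs, and then solve the latter greedily.

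First, I would apply Proposition~\ref{thm:frenkel}: after absorbing the unitary $U$ into the basis (so $\ch{U}$ is effectively the identity), there exist an orthonormal basis $\{|k\rangle\}_{k=0}^{d-1}$ of $\hilb{H}$, an assignment $x\mapsto k_x\in\{0,\ldots,d-1\}$ (possibly many-to-one, since $n$ may exceed $d$), and a partition $\{L_y\}_{y=0}^{n-1}$ of $\{0,\ldots,d-1\}$ (possibly with empty parts) such that the supremum is attained by
\begin{align*}
\rho_x = |k_x\rangle\langle k_x|, \qquad \pi_y = \sum_{k\in L_y} |k\rangle\langle k|.
\end{align*}
Substituting into the figure of merit for a discrimination game gives $\sum_x g_x\Tr[\rho_x\pi_x]=\sum_{x:\,k_x\in L_x} g_x$, so the utility equals the maximum of this quantity over assignments $k$ and partitions $\{L_y\}$.

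Next, I would decouple these two optimizations. For any partition in which every nonempty $L_y$ is a strict subset of $\{0,\ldots,d-1\}$ (e.g., any partition with at least two nonempty parts), each $x$ with $L_x\neq\emptyset$ admits both choices $k_x\in L_x$ (contributing $g_x$) and $k_x\notin L_x$ (contributing $0$), so the optimal per-$x$ contribution is $\max(g_x,0)=g_x\Theta(g_x)$; each $x$ with $L_x=\emptyset$ contributes $0$. Hence the payoff equals $\sum_{x:\,L_x\neq\emptyset} g_x\Theta(g_x)$.

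Finally, since $\{L_y\}$ partitions a set of size $d$, at most $d$ of the $L_y$'s can be nonempty. Because $g_x\Theta(g_x)\geq 0$ is nonincreasing in $x$ under the assumed sorting, the greedy choice $L_x\neq\emptyset$ precisely for $x=0,\ldots,d-1$ maximizes the sum, yielding $\sum_{x=0}^{d-1} g_x\Theta(g_x)$. The step requiring the most care is the correct invocation of Proposition~\ref{thm:frenkel} in the regime $n>d$, where the encoding $x\mapsto\rho_x$ must reuse basis states; once the reduction to the combinatorial problem is in place, the appearance of the Heaviside truncation—reflecting the freedom to encode orthogonally to $\pi_x$ whenever $g_x<0$—is immediate.
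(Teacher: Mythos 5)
Your argument is correct, but it takes a genuinely different route from the paper. You invoke Proposition~\ref{thm:frenkel} as a black box to collapse the problem to a purely combinatorial one---choosing an assignment $x\mapsto k_x$ and a partition $\{L_y\}$ of the basis labels---and then solve it greedily. The paper's proof does not actually use Proposition~\ref{thm:frenkel} at all: it first disposes of the all-nonpositive case via Lemma~\ref{thm:bound}, then shows by a measurement-modification argument (setting $\pi_x\to 0$ whenever $g_x<0$ and absorbing the leftover into $\pi_0$) that one may replace $g$ by $g'_x:=g_x\Theta(g_x)$, then bounds $\Tr[\rho_x\pi_x]\le\N{\pi_x}_\infty$ and relaxes the POVM completeness condition $\sum_x\pi_x=\openone$ to the weaker constraints $\sum_x\Tr[\pi_x]=d$ and $\N{\pi_x}_\infty\le1$, so that the resulting linear program is solved greedily by $\sum_{x=0}^{d-1}g'_x$ and saturated by $\pi_x=\ket{x}\!\!\bra{x}$. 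The trade-off is clear: your route is shorter but rests on the full strength of the Frenkel--Weiner theorem (a deep result), whereas the paper's route is self-contained and elementary, which is presumably why the authors describe it as a \emph{simplified} proof of the Elron--Eldar result. One small point to tidy in your write-up: your decoupling step excludes partitions with a single nonempty part (where $k_x\in L_x$ is forced and the contribution $g_x$ cannot be truncated to $\max(g_x,0)$); you should add the one-line observation that such a partition yields at most $g_{x_0}\le\sum_{x=0}^{d-1}g_x\Theta(g_x)$, so it never beats the greedy choice. Likewise, both your proof and the statement implicitly assume $d\ge2$ and $n\ge2$, as does the paper's.
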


\begin{proof}
	For a unitary channel $\ch{U}$, the unitary operator can be absorbed into the encoding and the decoding such that, without loss of generality, we can consider only the case of the identity channel $\ch{U}=\id$.
	
  If   $g_x    \le   0$    for   any   $x$    then   by
  Lemma~\ref{thm:bound}       one        has       that
  $U(\ch{U},  g) \le  0$.   This bound  is attained  by
  encoding $\{  \rho_x \}$  and decoding $\{  \pi_y \}$
  given by
  \begin{align*}
    \rho_x = \begin{cases} \ket{1} \! \! \bra{1},  & \textrm{
          if }  x = 0,  \\ \ket{0} \! \! \bra{0}, & \textrm  { if }  x >
        0, \end{cases}
  \end{align*}
  and
  \begin{align*}
    \pi_x = \begin{cases} \ket{0} \! \! \bra{0}, & \textrm{ if
      } x = 0, \\  \frac{1}{n-1} (\openone - \ket{0} \! \! \bra{0}), &
      \textrm { if } x > 0. \end{cases}
  \end{align*}
  
  Then  let $g_0  > 0$.   For  any $\{  \rho_x \}$  and
  $\{ \pi_x \}$ one has
  \begin{align*}
    \sum_x \Tr[\rho_x \pi_x] g_x  \le \sum_x \Tr[\rho_x \pi'_x]
    g_x
  \end{align*}
  where for any $x \ge 1$ one has
  \begin{align*}
    \pi'_x := \left \{ \begin{matrix}  \pi_x & \textrm{ if } g_x
        \ge  0, \\  0 &  \textrm  { if  } g_x  < 0  \end{matrix}
    \right.
  \end{align*}
  and  $\pi'_0 =  \openone -  \sum_{x \neq  0} \pi'_x$,
  therefore $\pi'_0 \ge \pi_0$.

  Therefore the discrimination game $g'_x$ given by
  \begin{align*}
    g'_x := g_x \Theta(g(x)),
  \end{align*}
  is                      such                     that
  $U(\ch{U}, g')  = U(\ch{U},  g)$.  Moreover
  $U(\ch{U},  g)$  and  $U(\ch{U},  g')$  are
  attained by the same encoding and decoding.
  
  By denoting with $|| \cdot ||_\infty$ the largest singular
  value, By Def.~\ref{def:utility} one has
  \begin{align*}
    U(\ch{U}, g')  & := \max_{\{\rho_x\},  \{\pi_x\}} \sum_x
    \Tr[\rho_x \pi_x] g'_x \\ & = \max_{\{\pi_x\}} \sum_x
    g'_x || \pi_x ||_{\infty},
  \end{align*}
  where  the  second  equality  represents  an  upper  bound
  saturated when  $\rho_x$ is  the projector on  the largest
  eigenvalue of $\pi_x$, for any $x$.
  
  We relax  the condition $\sum_x  \pi_x = \openone$  to the
  weaker  condition $\sum_x  \Tr[\pi_x]  = d$,  where $d  :=
  \dim\hilb{H}$.  Thus we have  $\sum_x || \pi_x ||_{\infty}
  \le d$. Moreover, since $\pi_x  \le \openone$ for any $x$,
  one has $|| \pi_x ||_{\infty} \le 1$ for any $x$. Then one
  clearly has
  \begin{align*}
    \max_{\{\pi_x\}}   \sum_x    g'_x   ||   \pi_x    ||_{\infty}   =
    \sum_{x=0}^{d-1} g'_x,
  \end{align*}
  where  the equality  represents an  upper bound  saturated
  when $\pi_x = \ket{x} \!\! \bra{x}$  for any $x = 0, \dots
  d-1$, so the statement remains proved.
\end{proof}
 
\section{Utility for binary-output games}
\label{sec:binary}

In  this section  we consider  binary-output games,  which are
games  with $n  = 2$  outputs (but  an arbitrary number  $m$ of
inputs).  It  is perhaps  surprising that  any binary-output
game  can be  recast  as a  binary-input-output (binary  for
short) discrimination game, that is  a diagonal game with $m
= n =  2$ inputs and outputs, parametrized by  a single real
parameter, as shown by the following Lemma.

\begin{lmm}
  \label{lmm:binary}
  For   any  binary-output   game   $g$,   there  exists   a
  binary-input-output discrimination game $g'$ such that
  \begin{align*}
  	U(\ch{C}, g) = a\cdot U(\ch{C}, g') + b\;,
  \end{align*}
  for any channel $\ch{C}$. In the above formula one has $g'
  = \diag(g_0,1-g_0)$, where
  \begin{align*}
    a  & := \sum_x  \left| g_{x,  0}  -  g_{x,  1} \right|,\\
    b  & :=  \sum_x  \min_y g_{x,y},\\
    g_0 & := \frac1a \sum_x \left( g_{x,0}  - g_{x,1} \right) \Theta \left( g_{x,0}  -  g_{x,1} \right),
  \end{align*}
  where   $\Theta(x)$  is   the  Heaviside   step  function.
  Moreover,  the   same  encoding  and   decoding  achieving
  $U(\ch{C}, g')$ also achieve $U(\ch{C}, g)$.
\end{lmm}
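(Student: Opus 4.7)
The plan is to reduce the general binary-output game to the diagonal binary game in two stages: first a row-wise affine shift using Lemma~\ref{lmm:quasilinearity}, then a bijection between strategies that collapses Alice's input alphabet to $\{0,1\}$.

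For the first stage, I would invoke Lemma~\ref{lmm:quasilinearity} with $\alpha=1$ and $\beta_x := -\min_y g_{x,y}$. This yields a game $\tilde g_{x,y} := g_{x,y}-\min_y g_{x,y}$ whose rows each contain a zero, with the nonzero entry equal to $h_x := |g_{x,0}-g_{x,1}|$, so $\tilde g_{x,y}=h_x$ when $y = y^*(x):=\arg\max_y g_{x,y}$ and $\tilde g_{x,y}=0$ otherwise. The lemma delivers $U(\ch{C},g) = U(\ch{C},\tilde g) + \sum_x \min_y g_{x,y} = U(\ch{C},\tilde g) + b$, and both utilities are attained by the same encoding/decoding. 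Hence the problem reduces to showing $U(\ch{C},\tilde g) = a\, U(\ch{C},g')$, preserving optimal strategies.

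For the second stage, partition the inputs as $A_0 := \{x : g_{x,0}\ge g_{x,1}\}$ and $A_1 := \{x : g_{x,0}<g_{x,1}\}$, so that $y^*(x) = 0$ on $A_0$ and $y^*(x) = 1$ on $A_1$. Note that $\sum_{x\in A_0}h_x = a g_0$ and $\sum_{x\in A_1}h_x = a(1-g_0)$ by the definitions of $a$ and $g_0$. The figure of merit then factors as
\begin{align*}
\sum_{x,y}\Tr[\ch{C}(\rho_x)\pi_y]\tilde g_{x,y} = a g_0 \Tr[\ch{C}(\bar\rho_0)\pi_0] + a(1-g_0)\Tr[\ch{C}(\bar\rho_1)\pi_1],
\end{align*}
where $\bar\rho_{x'} := \frac{1}{a g'_{x',x'}}\sum_{x\in A_{x'}} h_x \rho_x$ for $x'\in\{0,1\}$, using the fact that $\ch{C}$ is linear. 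The right-hand side is exactly $a$ times the average payoff of the binary discrimination game $g' = \diag(g_0,1-g_0)$ on the encoding $\{\bar\rho_0,\bar\rho_1\}$ with the same decoding $\{\pi_0,\pi_1\}$. Maximizing both sides over encodings and decodings gives $U(\ch{C},\tilde g) \le a\, U(\ch{C},g')$. The reverse inequality follows by lifting any binary encoding $\{\rho'_0,\rho'_1\}$ for $g'$ to a multi-input encoding for $\tilde g$ via $\rho_x := \rho'_{x'}$ whenever $x\in A_{x'}$, which by the same identity reproduces the value $a\cdot(\text{payoff for }g')$.

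Combining the two stages yields $U(\ch{C},g) = a\, U(\ch{C},g') + b$, and the explicit bidirectional correspondence between strategies shows that optimal encoding/decoding transfer from $g'$ to $g$ (by replicating $\rho'_{x'}$ across all $x\in A_{x'}$) with the decoding literally unchanged. The step requiring the most care is verifying that the strategy lift in both directions is valid—i.e., that $\bar\rho_{x'}$ is a bona fide state (which is automatic, being a convex combination) and that the two-valued encoding derived from $\{\rho'_0,\rho'_1\}$ indeed saturates the optimum for $\tilde g$—but no genuine obstacle arises since the figure of merit is linear in each $\rho_x$.
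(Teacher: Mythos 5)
Your proposal is correct and follows essentially the same two-stage route as the paper: a row-wise affine shift via Lemma~\ref{lmm:quasilinearity} (producing a game whose rows each contain a zero), followed by collapsing Alice's input alphabet according to the sign of $g_{x,0}-g_{x,1}$. The only difference is cosmetic: the paper merges the states within each group $A_{x'}$ by a pairwise exchange argument and absorbs the factor $a$ into the shifted game, whereas you merge them by the explicit convex average $\bar\rho_{x'}$ and keep $a$ outside, which if anything makes the two-sided equality $U(\ch{C},\tilde g)=a\,U(\ch{C},g')$ and the transfer of optimal strategies more transparent.
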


\begin{proof}
  For   any  binary-output   game   $g$,  consider   another
  binary-output game $\tilde{g}$ defined as
  \begin{align*}
    \tilde{g}_{x,y}  :=  \frac1a  \left(  g_{x,y}  -  \min_z
    g_{x,z} \right),
  \end{align*}
  for any  $x$ and  $y$.  Hence,  for any  $x$ one  has that
  $\tilde{g}_{x, y} \ge 0$ for any $y$, with equality for at
  least      one     value      of     $y$.       Due     to
  Lemma~\ref{lmm:quasilinearity}    one   immediately    has
  $U(\ch{C}, g) =  a\cdot U(\ch{C}, \tilde{g}) +  b$ and the
  encodings  and  decodings  attaining  $U(\ch{C},  g)$  and
  $U(\ch{C}, \tilde{g})$ are the same.
  
  For any encodings $\{ \rho_x  \}$, any decodings $\{ \pi_y
  \}$, and any $x_0$ and $x_1$ such that $\tilde{g}_{x_0, 1}
  = \tilde{g}_{x_1, 1} = 0$,  let without loss of generality
  $\Tr [  \ch{C} ( \rho_{x_1} )  \pi_0 ] \ge \Tr  [ \ch{C} (
    \rho_{x_0}  )  \pi_0  ]$.   Therefore,  replacing  state
  $\rho_{x_0}$ with state $\rho_{x_1}$ increases the average
  payoff.  Hence,  the utility  is attained when  states $\{
  \rho_x \}$  coincide for any $x$  such that $\tilde{g}_{x,
    1}  =   0$,  and  the   same  for  any  $x$   such  that
  $\tilde{g}_{x,  0} =  0$. Hence,  the utility  $U( \ch{C},
  \tilde{g}) = U  ( \ch{C}, g')$, and  the statement remains
  proved.
\end{proof}

Therefore, without  loss of  generality in the  following we
consider binary  discrimination games  $g$, which  are games
with  $m =  2$ inputs  and $n  = 2$  outputs, such  that the
payoff  $g$  is  diagonal   and  constitutes  a  probability
distribution.
  
For  any  such  a  game,  as  an  immediate  consequence  of
Helstrom's theorem~\cite{Hel76} one has that
\begin{align*}
  U(\ch{C}, g) = \max_{\{ \rho_x \}} \frac12 \left( 1 + \| H
  \|_1 \right).
\end{align*}
where  $|| \cdot  ||_1 :=  \Tr [  | \cdot  | ]$  denotes the
$1$-Schatten norm, and
\begin{align*}
  H := g_0 \ch{C}(\rho_0) - (1-g_0) \ch{C}(\rho_1),
\end{align*}
is the Helstrom matrix.

For any  commutativity-preserving channel, that  is any
channel         $\ch{C}$         such         that
$[\ch{C}(\rho), \ch{C}(\sigma)] = 0$ whenever
$[\rho, \sigma]  = 0$, and  any binary discrimination game $g$,  it is
straightforward  that  the  optimal  encoding  is  also
orthogonal,  that  is   $\braket{\phi_0|\phi_1}  =  0$.
However, it is perhaps  surprising that this fact holds
true also for non-commutativity-preserving channels, as
stated by the following Lemma.

\begin{lmm}
  \label{lmm:orthogonal}
  For  any channel  $\ch{C}$ and  any binary  discrimination
  game $g$,  the utility  $U(\ch{C}, g)$  is attained  by an
  orthonormal   encoding   $\{   \phi_x^*   \}$,   that   is
  $\braket{\phi_{x_0}^*|\phi_{x_1}^*} = \delta_{x_0,x_1}$.
\end{lmm}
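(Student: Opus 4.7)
The plan is to reformulate the Helstrom expression for $U(\ch{C},g)$ as a double maximization over effects and pure states, in which the encoding states appear as extremizers of a quadratic form associated with a Hermitian operator; orthogonality will then fall out of the spectral theorem. As a preliminary step, I would note that the map $(\rho_0,\rho_1)\mapsto\|g_0\ch{C}(\rho_0)-(1-g_0)\ch{C}(\rho_1)\|_1$ is jointly convex (composition of linear channels with a norm), so by Bauer's maximum principle the optimization can be restricted to extreme points of the joint state space, i.e., pairs of pure states $\ket{\phi_x}\!\bra{\phi_x}$.

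Next, I would write $H=H_+-H_-$ with $H_\pm\ge 0$ of orthogonal support, so that $\|H\|_1=2\Tr H_+-\Tr H$. Since $\ch{C}$ is trace-preserving, $\Tr H = 2g_0-1$ is a constant independent of the encoding, and maximizing $U(\ch{C},g)$ is equivalent to maximizing $\Tr H_+$. Applying the standard variational identity $\Tr H_+=\max_{0\le\Pi\le\openone}\Tr[\Pi H]$ and passing to the Heisenberg adjoint $\ch{C}^*$, the problem becomes
\begin{align*}
\max_{0\le\Pi\le\openone}\ \max_{\ket{\phi_0},\ket{\phi_1}}\Big\{g_0\bra{\phi_0}\ch{C}^*(\Pi)\ket{\phi_0}-(1-g_0)\bra{\phi_1}\ch{C}^*(\Pi)\ket{\phi_1}\Big\}.
\end{align*}
The point of this rewriting is that the two inner maxima are now decoupled and both concern the same Hermitian operator $\ch{C}^*(\Pi)$.

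For each fixed effect $\Pi$, the inner maximization over $\ket{\phi_0}$ is solved by the eigenvector of $\ch{C}^*(\Pi)$ of largest eigenvalue, while the inner maximization over $\ket{\phi_1}$ (a minimization of a linear form with positive coefficient $(1-g_0)$) is solved by the eigenvector of smallest eigenvalue. Eigenvectors associated with distinct eigenvalues of a Hermitian operator are orthogonal, and in the degenerate case orthonormal representatives can always be chosen inside each eigenspace, so a joint maximizer $(\ket{\phi_0^*},\ket{\phi_1^*})$ with $\braket{\phi_0^*|\phi_1^*}=0$ exists. The main step to get right is the interchange of maxima justified by $\Tr H_+=\max_{0\le\Pi\le\openone}\Tr[\Pi H]$ combined with the reduction to pure encodings; once this restructuring is achieved, the orthogonality of the optimal encoding is an immediate consequence of the spectral theorem applied to $\ch{C}^*(\Pi)$, with no need to invoke any commutativity-preservation property of $\ch{C}$.
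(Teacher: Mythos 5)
Your proof is correct, but it follows a genuinely different route from the paper's. The paper fixes an arbitrary encoding, forms $K := g_0\ket{\phi_0}\!\bra{\phi_0}-(1-g_0)\ket{\phi_1}\!\bra{\phi_1}$, and applies the pinching (dephasing) map onto the eigenbasis of $K$: this leaves $K$, and hence the Helstrom matrix $H=\ch{C}(K)$, unchanged while replacing the encoding by a pair of \emph{commuting} states, after which convexity of $X,Y\mapsto\Tr[\pp(X-Y)]$ reduces the problem to pure orthonormal encodings. You instead work on the measurement side: using $\norm{H}_1=2\Tr H_+-\Tr H$ with $\Tr H=2g_0-1$ fixed by trace preservation, the variational identity $\Tr H_+=\max_{0\le\Pi\le\openone}\Tr[\Pi H]$, and the adjoint channel, you decouple the two encoding states into separate linear optimizations of $\bra{\phi_x}\ch{C}^*(\Pi)\ket{\phi_x}$, solved by the extremal eigenvectors of the single Hermitian operator $\ch{C}^*(\Pi)$, which are orthogonal (or can be chosen so in the fully degenerate case $\ch{C}^*(\Pi)\propto\openone$). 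All the individual steps check out: the interchange of the two suprema is unconditionally valid, and the resulting orthonormal pair for the outer-optimal $\Pi^*$ achieves the global double maximum and hence attains $U(\ch{C},g)$; your preliminary Bauer-principle reduction to pure states is actually redundant, since $\Tr[\rho\,\ch{C}^*(\Pi)]$ is already extremized at eigenvectors. What your approach buys is an explicit characterization of the optimal encoding as the top and bottom eigenvectors of $\ch{C}^*(\Pi^*)$ for the optimal Helstrom effect $\Pi^*$, tying encoding and decoding together; what the paper's pinching argument buys is that it never leaves the input space and exhibits directly why only the operator $K$ (not the individual states) matters. Both make it equally clear that no commutativity-preservation assumption on $\ch{C}$ is needed.
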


\begin{proof}
  For any encoding  $\{ \ket{\phi_x} \}$, let  us define the
  matrix $K$ and a spectral decomposition as follows:
  \begin{align*}
    K  :=  g_0 \ket{\phi_0}\!\!\bra{\phi_0}  -  (1-g_0)
    \ket{\phi_1}\!\!\bra{\phi_1}  =:  \sum_k  \lambda_k
    \ket{k} \! \! \bra{k},
  \end{align*}
  and consider  the dephasing channel $\ch{P}_0(  \cdot ) :=
  \sum_k \bra{k} \cdot \ket{k} \ket{k} \!  \!  \bra{k}$ on a
  basis of eigenvectors of $K$.  One has that
  \begin{align*}
    K  =  \ch{P}_0(K)  = g_0  \sigma_0  -  (1-g_0)
    \sigma_1,
  \end{align*}
  where    $\sigma_x     :=    \ch{P}_0(\ket{\phi_x}    \!\!
  \bra{\phi_x})$   and    therefore   $\sigma_x    \ge   0$,
  $\Tr[\sigma_x]  =  1$,  and $[\sigma_0,  \sigma_1]  =  0$,
  namely $\sigma_x$ are  commuting states.  Since $U(\ch{C},
  g)$ only depends  on the encoding  $\{ \phi_x \}$ through
  the  Helstrom  matrix  $H   :=  \ch{C}(K)$,  encoding  $\{
  \sigma_x \}$ performs as well  as encoding $\{ \phi_x \}$,
  and therefore without loss  of generality one can maximize
  over  commuting  encodings  only.   By  the  convexity  of
  $\Tr[\pp(X-Y)]$  in  $X$  and   $Y$,  a  pure  orthonormal
  encoding $\{ \phi_x \}$ suffices.
\end{proof}

Notice that Lemma~\ref{lmm:orthogonal} cannot be generalized
to  discrimination games  with more  than two  alternatives.
For example,  let $\ch{M}$ be the  quantum-classical channel
corresponding  to  the  trine  POVM  of  a  qubit,  that  is
$\ch{M}(\rho)  =   \sum_{y}  \bra{\pi_y}   \rho  \ket{\pi_y}
\ket{y} \!   \!  \bra{y}$  with $\ket{\pi_y} :=  \frac23 U^y
\ket{0}$ and  $U := e^{-i\frac{2 \pi}{3}\sigma_Y}$,  and let
$g$ be  the discrimination  game $g_{x,y}  := \delta_{x,y}$.
Then one has  $U(\ch{M}, g) = 2$ and  the optimal encoding
is the trine encoding, that is $\ket{\phi_x} = U^x \ket{0}$,
which is of course  not pairwise commuting.  For comparison,
the  best pairwise  commuting  encoding  is $\ket{\phi_0}  =
\ket{0}$ and $\ket{\phi_1} = \ket{\phi_2} = \ket{1}$, and in
this  case  the  average  payoff  is  given  by  $\sum_{x,y}
|\braket{\phi_x|\pi_y}|^2 = 5/3 < 2$.

The  Pauli channel  is an  example of  a qubit  channel
which is commutativity preserving.

\begin{dfn}[Pauli channel]
  \label{dfn:pauli}
  The     action      of     the      Pauli     channel
  $\ch{P}_{\vec\lambda}:     \lin{\hilb{H}}    \to
  \lin{\hilb{H}}$  with  $\dim(\hilb{H})  = 2$  on  any
  state $\rho$ is given by
  \begin{align*}
    \ch{P}_{\vec\lambda}(\rho) = \lambda_0 \rho + \sum_{k=1}^3 \lambda_k \sigma_k \rho \sigma_k,
  \end{align*}
  where $\sigma_1  = \sigma_X$, $\sigma_2  = \sigma_Y$,
  and $\sigma_3 = \sigma_Z$ are the Pauli matrices.
\end{dfn}

\begin{prop}
  For    any    binary discrimination   game   $g$,    the    utility
  $U(\ch{P}_{\vec\lambda},  g)$  of the  Pauli  channel
  $\ch{P}_{\vec\lambda}$ is given by
  \begin{align*}
    U(\ch{P}_{\vec\lambda},    g)     =    \max    \left(
    g_0,
    \frac{1+\max_{k \ge 1}|2(\lambda_0+\lambda_k)-1|}2 \right).
  \end{align*}
\end{prop}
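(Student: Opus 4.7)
The plan is to exploit the fact that the Pauli channel has a diagonal action on the Bloch vector, reducing the optimization to a one-parameter problem. First, I would invoke Lemma~\ref{lmm:orthogonal} to restrict the optimization to orthonormal pure-state encodings, and then write the utility using Helstrom's formula:
\begin{align*}
U(\ch{P}_{\vec\lambda}, g) = \max_{\vec n} \tfrac{1}{2}\bigl(1 + \|H(\vec n)\|_1\bigr),
\end{align*}
where $H(\vec n) := g_0 \ch{P}_{\vec\lambda}(\ket{\phi_0}\!\!\bra{\phi_0}) - (1-g_0)\ch{P}_{\vec\lambda}(\ket{\phi_1}\!\!\bra{\phi_1})$ and the orthonormal encoding is parametrized as $\ket{\phi_{0/1}}\!\!\bra{\phi_{0/1}} = \frac{1}{2}(\openone \pm \vec n\cdot \vec\sigma)$ for a unit vector $\vec n$.

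Next, I would compute the action of $\ch{P}_{\vec\lambda}$ on Pauli operators. Using $\sigma_k \sigma_j \sigma_k = \sigma_j$ if $k=j$ and $-\sigma_j$ otherwise (for $k,j\in\{1,2,3\}$), together with $\sum_k\lambda_k=1$, one finds $\ch{P}_{\vec\lambda}(\sigma_j) = \mu_j \sigma_j$ with $\mu_j := 2(\lambda_0+\lambda_j) - 1$. Therefore
\begin{align*}
H(\vec n) = \tfrac{1}{2}(2g_0-1)\openone + \tfrac{1}{2}\sum_{k=1}^{3}\mu_k n_k \sigma_k.
\end{align*}

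The remaining work is linear algebra on a $2\times 2$ matrix. Since $H(\vec n) = \alpha\openone + \vec v\cdot\vec\sigma$ with $\alpha = \tfrac{1}{2}(2g_0-1)$ and $v_k = \tfrac{1}{2}\mu_k n_k$, its eigenvalues are $\alpha \pm |\vec v|$, so $\|H(\vec n)\|_1 = |\alpha+|\vec v|| + |\alpha-|\vec v|| = 2\max(|\alpha|,|\vec v|) = \max(|2g_0-1|,\,|\vec n'|)$, where $\vec n' := (\mu_1 n_1,\mu_2 n_2,\mu_3 n_3)$. Maximizing $|\vec n'|^2 = \sum_k \mu_k^2 n_k^2$ over unit vectors $\vec n$ yields $\max_{\vec n}|\vec n'| = \max_{k\ge 1}|\mu_k|$, attained by aligning $\vec n$ with the axis of largest $|\mu_k|$.

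Combining these, and noting that without loss of generality $g_0 \ge 1/2$ so that $\tfrac{1}{2}(1+|2g_0-1|) = g_0$, gives
\begin{align*}
U(\ch{P}_{\vec\lambda}, g) = \max\!\left(g_0,\;\tfrac{1+\max_{k\ge 1}|2(\lambda_0+\lambda_k)-1|}{2}\right),
\end{align*}
as claimed. The only subtle step is the case analysis for $\|H\|_1$ (where one must keep careful track of the signs of $\alpha\pm|\vec v|$); everything else is mechanical once Lemma~\ref{lmm:orthogonal} has restricted the encoding to orthogonal pure states on the Bloch sphere.
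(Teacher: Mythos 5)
Your proposal is correct and follows essentially the same route as the paper: invoke Lemma~\ref{lmm:orthogonal} to restrict to orthonormal pure encodings, compute the Pauli channel's diagonal action on the Bloch vector to get the Helstrom matrix $H=\tfrac{1}{2}(2g_0-1)\openone+\tfrac{1}{2}\sum_k\mu_k n_k\sigma_k$, read off its eigenvalues, and maximize $\sum_k\mu_k^2 n_k^2$ over the unit sphere to obtain $\max_{k\ge1}\abs{\mu_k}$. The only cosmetic difference is that the paper performs this last maximization via the substitution $\beta_k=\alpha_k^2$ onto the probability simplex, while you argue directly by aligning $\vec n$ with the dominant axis; your explicit identity $\norm{H}_1=2\max(\abs{\alpha},\abs{\vec v})$ is if anything slightly cleaner than the paper's treatment.
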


\begin{proof}
  By   Lemma~\ref{lmm:orthogonal}   it   suffices   to
  consider     a      pure     orthonormal     encoding
  $\{  \phi_\pm  \}$.  Upon  decomposition  over  Pauli
  matrices one has
  \begin{align*}
    \phi_\pm  = \frac{\openone}2  \pm  \sum_{i  = 1}^3  \alpha_i
    \frac{\sigma_i}2, \qquad \sum_{i=1}^3 \alpha_i^2 = 1.
  \end{align*}
  By direct  computation one has
  \begin{align*}
    \ch{P}_{\vec\lambda}(\phi_\pm)  =  \frac{\openone}2  \pm
    \sum_{i  =}^3 \alpha_i  (2(\lambda_0 +  \lambda_i) -  1)
    \frac{\sigma_i}2
  \end{align*}
  Since the eigenvalues of the  state $\openone + x \sigma_x
  + y  \sigma_y + z \sigma_z$  are $1 \pm \sqrt{x^2  + y^2 +
    z^2}$,  one has  that  the eigenvalues  of the  Helstrom
  matrix  $H =  g_0  \ch{P}_{\vec\lambda}(\phi_+) -  (1-g_0)
  \ch{P}_{\vec\lambda}(\phi_-)$ are
  \begin{align*}
    \frac{2g_0-1     \pm     \sqrt{\sum_{i=1}^3     \alpha_i^2
        [2(\lambda_0 + \lambda_i) -1]^2 }}2,
  \end{align*}
  and thus
  \begin{align*}
    &  U(\ch{P}_{\vec\lambda},  g)  \\ =  &  \max  \left(
                                                 g_0,  \max_{\substack{\vec\alpha \\  \sum_i
    \alpha_i^2   =   1}}  \frac{1   +   \sqrt{\sum_{i=1}^3
    \alpha_i^2   [2(\lambda_0  +   \lambda_i)  -1]^2   }}2
    \right).
  \end{align*}
  By making  the substitution  $\beta_i := \alpha_i^2$,  one has
  that $\vec\beta$  is a probability distribution  and therefore
  the maximum over $\vec\alpha$ can be explicitly computed as
  \begin{align*}
    \max_{\substack{\vec\beta \\  \sum_i \beta_i = 1  \\ \vec\beta
        \ge   0}}\sqrt{\sum_{i=1}^3   \beta_i   [2(\lambda_0   +
      \lambda_i) -1]^2} = \max_{i\ge1} |2(\lambda_0 + \lambda_i)
    - 1|,
  \end{align*}
  and the statement immediately follows.
\end{proof}

The amplitude-damping channel is  an example of a qubit
channel which is not commutativity preserving.

\begin{dfn}[Amplitude damping channel]
  \label{dfn:ampdamp}
  The   action  of   the   amplitude  damping   channel
  $\ch{A}_\eta: \lin{\hilb{H}} \to \lin{\hilb{H}}$
  with    $\dim(\hilb{H})   =    2$   on    any   state
  $\rho := \begin{pmatrix}
    1-\beta & \gamma \\
    \gamma^* & \beta
      \end{pmatrix}$is given by
  \begin{align*}
    \ch{A}_\eta(\rho) = \begin{pmatrix}
      1-\eta \beta & \sqrt{\eta}\gamma \\
      \sqrt{\eta}\gamma^* & \eta \beta
    \end{pmatrix}.
  \end{align*}
\end{dfn}

\begin{prop}
  \label{prop:ampdamp}
  The utility $U(\ch{A}_\eta, g)$ of any amplitude
  damping  channel  $\ch{A}_\eta$ for  any  binary discrimination
  game $g$ is given by
  \begin{align*}
    U(\ch{A}_\eta,  g)  =  \frac{1  +  \sqrt{1  -  4g_0(1-\eta)
    +4g_0^2 (1-\eta)}}2,
  \end{align*}
  and an optimal encoding is given by
  \begin{align*}
    \ket{\psi_0^*}  =  \sqrt{g_0}   \ket{0}  +  \sqrt{1-g_0}
    \ket{1},  \\  \ket{\psi_1^*}  = \sqrt{1-g_0}  \ket{0}  -
    \sqrt{g_0} \ket{1}.
  \end{align*}
\end{prop}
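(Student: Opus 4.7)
The plan is to reduce the problem to an explicit two-parameter optimization by invoking Lemma~\ref{lmm:orthogonal} and Helstrom's theorem, and then to carry out the maximization of the Schatten-$1$ norm of a $2\times 2$ Helstrom matrix analytically. More precisely, by Lemma~\ref{lmm:orthogonal} we can restrict to orthonormal pure encodings, so without loss of generality we parametrize
\[
\ket{\psi_0} = \cos(\theta/2)\ket{0} + e^{i\phi}\sin(\theta/2)\ket{1}, \qquad \ket{\psi_1} = \sin(\theta/2)\ket{0} - e^{i\phi}\cos(\theta/2)\ket{1},
\]
and by Helstrom's theorem we have $U(\ch{A}_\eta, g) = \tfrac12(1+\|H\|_1)$ with $H := g_0\,\ch{A}_\eta(\ketbra{\psi_0}) - (1-g_0)\,\ch{A}_\eta(\ketbra{\psi_1})$.

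First I would compute the entries of $H$ using Definition~\ref{dfn:ampdamp}. A short calculation shows that $\Tr H = 2g_0-1$ is independent of $\theta,\phi$ (as must be the case), that $|H_{01}|^2 = \eta\sin^2(\theta/2)\cos^2(\theta/2)$ is independent of $\phi$, and that $H_{00}H_{11}$ depends only on the real quantity $u := (1-g_0) - \sin^2(\theta/2)$. Since $H$ is a $2\times 2$ Hermitian matrix, I would use the identity $\|H\|_1^2 = (\Tr H)^2 - 4\det H$, valid whenever $\det H \le 0$ (which is guaranteed here because the Helstrom matrix of a discrimination game between two distinct states has one non-positive eigenvalue and $\|H\|_1 \ge |\Tr H|$).

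Substituting, the objective becomes
\[
\|H\|_1^2 = \bigl[(2g_0-1) + 2\eta u\bigr]^2 + 4\eta\,s(1-s),
\]
where $s := \sin^2(\theta/2)$ and $u = (1-g_0) - s$. Setting the derivative with respect to $s$ to zero yields $s = 1-g_0$, i.e.\ $u = 0$, which reproduces the claimed optimal encoding $\ket{\psi_0^*},\ket{\psi_1^*}$. Plugging this back gives $\|H\|_1^2 = (2g_0-1)^2 + 4\eta\, g_0(1-g_0) = 1 - 4g_0(1-\eta) + 4g_0^2(1-\eta)$, and the claimed formula follows after dividing by $2$ and adding $1/2$.

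The main obstacle I anticipate is confirming that the critical point $s = 1-g_0$ is truly a global maximum over $s\in[0,1]$ (and that $\phi$ genuinely plays no role, which is clear since it only shifts the phase of $H_{01}$ and cancels in $|H_{01}|^2$). Convexity of the quadratic term in $s$ plus boundedness of the second term means one should also check the boundary values $s=0$ and $s=1$; a direct substitution shows $\|H\|_1^2$ there equals $(2g_0-1)^2 + 4\eta^2 g_0(1-g_0)$ with an extra non-positive correction, strictly smaller than the interior value whenever $\eta<1$ and $g_0\in(0,1)$, so the interior critical point is indeed the maximum. The case $\eta = 1$ is trivial (the channel is the identity and we recover $U(\id,g) = g_0 + (1-g_0) = 1$ after noting the formula reduces correctly).
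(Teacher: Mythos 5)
Your overall route is essentially the paper's: reduce to an orthonormal pure encoding via Lemma~\ref{lmm:orthogonal}, write the Helstrom matrix of a one-parameter family of encodings, and optimize over that single parameter (your $s=\sin^2(\theta/2)$ is the paper's $\gamma^2$). Your algebra is cleaner than the paper's --- working with $(\Tr H)^2-4\det H$ instead of the explicit quartic $a\gamma^4+b\gamma^2+c$ makes the objective a manifestly concave quadratic in $s$ with derivative $8\eta(1-\eta)\bigl[(1-g_0)-s\bigr]$, so the interior critical point $s=1-g_0$ is automatically the global maximum and no boundary check is needed.

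There is, however, one genuinely false step: the claim that $\det H\le 0$ is ``guaranteed because the Helstrom matrix of a discrimination game between two distinct states has one non-positive eigenvalue.'' That is not true. Take $s=1$ (i.e.\ $\ket{\psi_0}=\ket{1}$, $\ket{\psi_1}=\ket{0}$): then $H=\diag\bigl(g_0(2-\eta)-1,\ g_0\eta\bigr)$, which is positive definite whenever $g_0>1/(2-\eta)$. At such points $\norm{H}_1=\abs{\Tr H}=\abs{2g_0-1}$, while your formula computes the strictly smaller quantity $\abs{\lambda_+-\lambda_-}$, so your $f(s)$ is only a lower bound on $\norm{H}_1^2$ and maximizing it does not, as written, bound the utility from above. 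The gap is easily closed: for a $2\times2$ Hermitian matrix one always has $\norm{H}_1=\max\bigl(\abs{\Tr H},\sqrt{(\Tr H)^2-4\det H}\bigr)$, and since at $s^*=1-g_0$ one has $\det H=-\eta g_0(1-g_0)\le 0$ and $f(s^*)=(2g_0-1)^2+4\eta g_0(1-g_0)\ge(2g_0-1)^2$, the branch $\abs{2g_0-1}$ never wins, and your answer stands. The paper confronts the same issue differently, by a reductio: it assumes the optimum has $\lambda_-\ge0$, deduces the utility would then equal the trivial-guess value $g_0$, and contradicts this with the $\gamma=0$ encoding, which already achieves $g_0+(1-g_0)\eta$. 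Either repair works; yours just needs to be stated.
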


\begin{proof}
  Due to Lemma~\ref{lmm:orthogonal}  the optimal encoding is
  pure and orthonormal. W.l.o.g. we consider
  \begin{align*}
    \ket{\psi_0} = \sqrt{1-\gamma^2} \ket{0} + \gamma \ket{1}, \\
    \ket{\psi_1} = \gamma \ket{0} - \sqrt{1-\gamma^2} \ket{1}.
  \end{align*}
  The  eigenvalues $\lambda_{\pm}$  of the Helstrom  matrix
  $H := g_0 \ch{A}_\eta(\ket{\psi_0} \!\! \bra{\psi_0})
  -      (1-g_0)\ch{A}_\eta(\ket{\psi_1}     \!      \!
  \bra{\psi_1})$ are given by
  \begin{align*}
    \lambda_{\pm}  = \frac{2g_0  - 1  \pm \sqrt{  a\gamma^4 +  b
        \gamma^2 + c }}2,
  \end{align*}
  where
  \begin{align*}
    a = & 4\eta(\eta-1),\\
    b = & 8\eta \left[ (\eta-1)g_0-\eta + 1 \right], \\
    c  =  &  4(\eta-1)^2g_0^2  -4   (2\eta^2  -  3\eta  +  1)g_0
    +4\eta(\eta-1) + 1.
  \end{align*}
  W.l.o.g.,  we   take  $g_0   \ge  \frac12$.    Hence,  the
  discrimination utility $U(\ch{A}_\eta, g)$ is given by
  \begin{align*}
    U(\ch{A}_\eta,  g)  =   \max_{\gamma}  \frac{1+\lambda_+  +
    |\lambda_-|}2.
  \end{align*}

  If  the  utility  is   achieved  for  $\gamma$  such  that
  $\lambda_- \ge 0$, by direct inspection one has
  \begin{align*}
    \frac{1+\lambda_+ + |\lambda_-|}2 = g_0,
  \end{align*}
  for any $\eta$, namely
  \begin{align*}
    U(\ch{A}_\eta, g) = g_0.
  \end{align*}
  However,   this  is   absurd,   because  upon   setting
  $\gamma  = 0$  (i.e.  $\ket{\psi_x}  = \ket{x}$) it immediately follows that
  \begin{align*}
    \frac{1+\lambda_+ + |\lambda_-|}2 = g_0 + (1-g_0)\eta,
  \end{align*}
  namely
  \begin{align*}
    U(\ch{A}_\eta, g) \ge g_0 + (1-g_0)\eta.
  \end{align*}
  Therefore  $U(\ch{A}_\eta,  g)  > g_0$  for  any
  $0 < \eta \le 1$ and $\gamma$ and $g_0 < 1$, which is
  absurd.  Then one can conclude $\lambda_- < 0$ and
  \begin{align*}
    U(\ch{A}_\eta,  g)  =   \max_{\gamma}  \frac{1+\lambda_+  -
    \lambda_-}2.
  \end{align*}
  
  By  direct   inspection,  the  zeros  of   the  first
  derivative  of  $(1+\lambda_+   -  \lambda_-)/2$  are
  attained  when   $\gamma  =  \pm   \sqrt{1-g_0}$  and
  $\gamma = 0$.  The zeros of the second derivative are
  attained when $\gamma = \pm \sqrt{(1-g_0)/3}$ and the
  second  derivative is  positive  when  $\gamma =  0$.
  Therefore    the    maximum    is    attained    when
  $\gamma  =  \pm   \sqrt{1-g_0}$,  and  the  statement
  follows by direct computation.
\end{proof}

Let  us  consider  the  two   extremal  cases  $g_0  =  1/2$
(balanced) and $g_0 = 1$ (maximally unbalanced).  For $g_0 =
1/2$ the optimal  encoding given by Prop.~\ref{prop:ampdamp}
becomes $\ket{\psi_\pm} =  \ket{\pm}$, and the corresponding
binary discrimination  utility becomes $U(\ch{A}_\eta,  g) =
(1 + \sqrt{\eta})/2$.   For $g_0 = 1$,  the optimal encoding
given     by      Proposition~\ref{prop:ampdamp}     becomes
$\ket{\psi_i}  =  \ket{i}$,  and  the  corresponding  binary
discrimination  utility  becomes  $U(\ch{A}_\eta, g)  =  1$.
This situation is depicted in Fig.~\ref{fig:ampdamp}.

\begin{figure}[h]
  \includegraphics[width=\columnwidth]{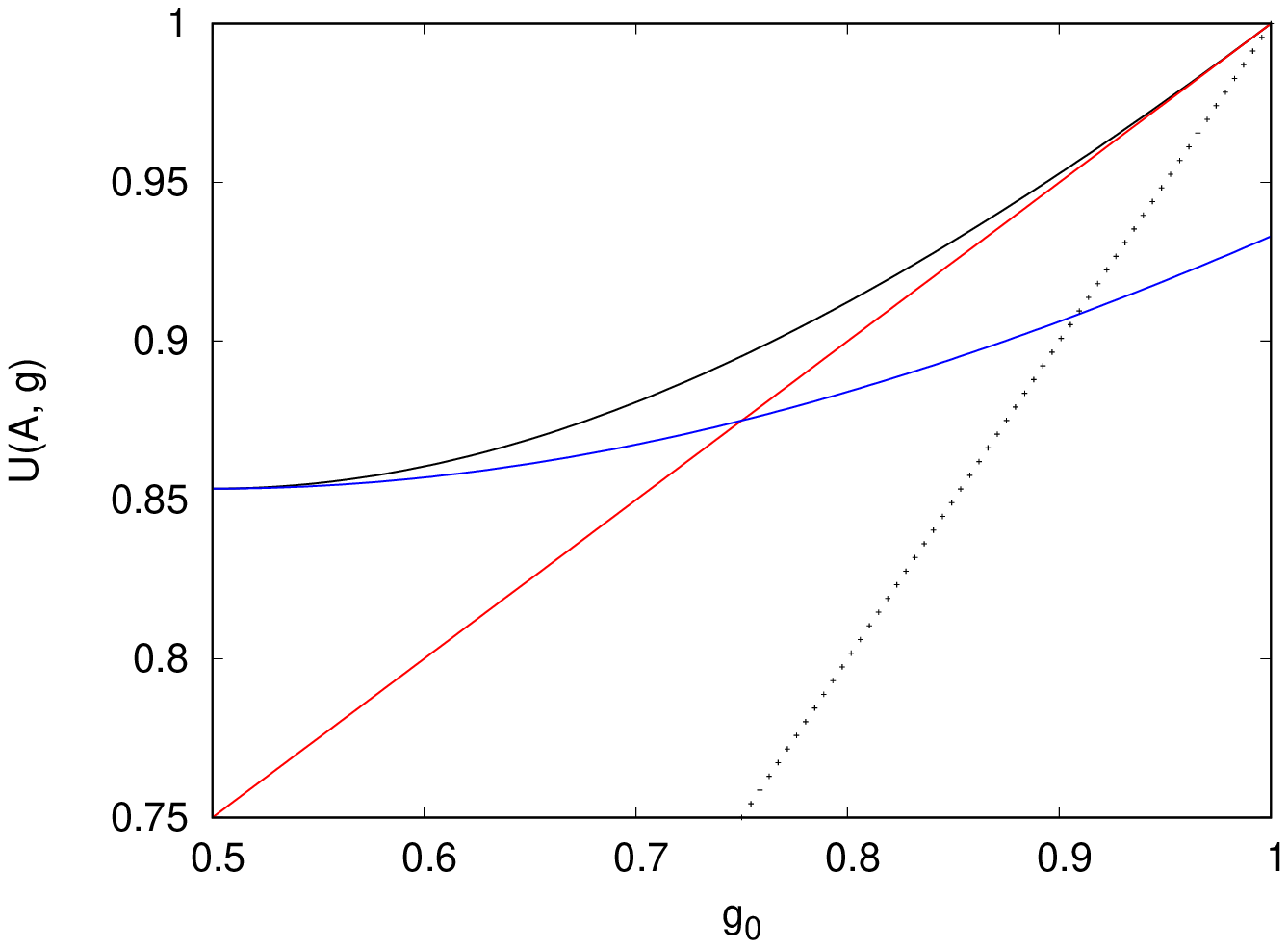}
  \caption{(Color      online)      Binary      utility
    $U(\ch{A}_\eta,  g)$  (upper  black  line)  of  the
    amplitude  damping  channel   $\ch{A}_\eta$  for  a
    binary              discrimination             game
    $g =  \diag(g_0, 1 -  g_0)$ as a function  of $g_0$
    for  fixed $\eta  = \frac12$.   The average  payoff
    with  encodings  $\ket{\psi_i} =  \ket{\pm}$  (blue
    curved  line)  and  $\ket{\psi_i} =  \ket{i}$  (red
    straight line)  are optimal  for the  balanced case
    $g_0 =  \frac12$ and  for the  maximally unbalanced
    case $g_0  = 1$,  respectively.  The  trivial guess
    (dotted  black   line)  is  optimal  only   in  the
    maximally unbalanced case.}
 \label{fig:ampdamp}
\end{figure}

An    example     of    arbitrary     dimensional,    non
commutativity-preserving      channel       is      the
shifted-depolarizing         channel.         Following
Refs.~\cite{GLR05, KNR05, Fuk05,  Ouy14}, we define the
shifted-depolarizing channel as follows.

\begin{dfn}[Shifted-depolarizing channel]
  The  action   of  the   shifted-depolarizing  channel
  $\ch{S}_\lambda:        \lin{\hilb{H}}       \to
  \lin{\hilb{H}}$         on          any         state
  $\rho    \in    \lin{\hilb{H}}$     is    given    by
  \begin{align*}
    \ch{S}_{\lambda}(\rho)  := \lambda
    \rho  +    (1-\lambda)   \Tr[\rho]
    \sigma,
  \end{align*}
  where $\sigma \in \lin{\hilb{H}}$ is a state.
\end{dfn}

\begin{prop}
  \label{prop:shifted}
  For    any    binary    game   $g$,    the    utility
  $U(\ch{S}_\lambda,  g)$  of the  shifted-depolarizing
  channel $\ch{S}_\lambda$ is given by
  \begin{align}
    \label{eq:shifted}
    & U(\ch{S}_\lambda, g) \nonumber\\ = & \max \left[ g_0, \frac{1+\lambda+(1-\lambda)(1-2s_{d-1})(2g_0-1)}2 \right],
  \end{align}
  where  $s_{d-1}$   is  the  smallest   eigenvalue  of
  $\sigma$.
\end{prop}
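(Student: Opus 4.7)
The plan is to combine Helstrom's binary-discrimination formula with the reductions available in this section and then exchange the joint maximization over encoding and decoding. First I would invoke Lemma~\ref{lmm:binary} and Lemma~\ref{lmm:orthogonal} to restrict to a binary discrimination game $g=\diag(g_0,1-g_0)$, w.l.o.g.\ with $g_0\ge 1/2$, and to a pure orthonormal encoding $\{\ket{\phi_0},\ket{\phi_1}\}$. Since $\ch{S}_\lambda(\ket{\phi_x}\!\!\bra{\phi_x})=\lambda\ket{\phi_x}\!\!\bra{\phi_x}+(1-\lambda)\sigma$, the Helstrom matrix introduced just before the proposition decomposes as
\begin{align*}
H &= \lambda K+(1-\lambda)(2g_0-1)\sigma,\\
K &:= g_0\ket{\phi_0}\!\!\bra{\phi_0}-(1-g_0)\ket{\phi_1}\!\!\bra{\phi_1},
\end{align*}
and the variational form of Helstrom's theorem yields $U(\ch{S}_\lambda,g) = (1-g_0)+\max_{\Pi}\Tr[\Pi H]$, where $\Pi$ ranges over projectors of the output Hilbert space.

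The key step is to exchange the order of maximization, optimizing the orthonormal encoding for each fixed $\Pi$. Writing
\begin{align*}
\Tr[\Pi H] &= \lambda\bigl[g_0\braket{\phi_0|\Pi|\phi_0}-(1-g_0)\braket{\phi_1|\Pi|\phi_1}\bigr] \\
&\quad +(1-\lambda)(2g_0-1)\Tr[\Pi\sigma],
\end{align*}
only the first bracket depends on the encoding. For any projector of rank $r\in\{1,\dots,d-1\}$ the subspaces $\mathrm{range}(\Pi)$ and $\ker(\Pi)$ are nontrivial and mutually orthogonal, so selecting $\ket{\phi_0}\in\mathrm{range}(\Pi)$ and $\ket{\phi_1}\in\ker(\Pi)$ is a feasible orthonormal pair that saturates the bracket at its obvious upper bound $g_0$; the boundary ranks $r=0$ and $r=d$ give $\Tr[\Pi H]=0$ and $\Tr[H]=2g_0-1$, respectively.

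The residual maximization of $\Tr[\Pi\sigma]$ over projectors of rank at most $d-1$ is attained by $\Pi=\openone-\ket{e_{d-1}}\!\!\bra{e_{d-1}}$, where $\ket{e_{d-1}}$ is the eigenvector of $\sigma$ for its smallest eigenvalue $s_{d-1}$, yielding $\Tr[\Pi\sigma]=1-s_{d-1}$. Taking the outer maximum between this nontrivial branch and the trivial-guess branch $\Pi=\openone$, then adding the $(1-g_0)$ offset, produces $\max\bigl[g_0,\;\lambda+(1-\lambda)g_0-(1-\lambda)(2g_0-1)s_{d-1}\bigr]$; a short algebraic rearrangement verifies this equals the right-hand side of Eq.~\eqref{eq:shifted}. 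The step I expect to be the main obstacle is justifying that the exchange of maxima is tight, namely that the structural pair $\ket{\phi_0}\in\mathrm{range}(\Pi)$, $\ket{\phi_1}\in\ker(\Pi)$ is indeed the best orthonormal encoding against \emph{every} such $\Pi$; it is precisely the feasibility restriction $1\le r\le d-1$ that forces the two-branch $\max$ appearing in the statement, by separating the nontrivial regime from the trivial-guess one.
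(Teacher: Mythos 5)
Your proof is correct, and it reaches Eq.~\eqref{eq:shifted} by a genuinely different route from the paper's. The paper stays with the trace-norm form of the Helstrom bound: it splits $H=R+C$ with $R=-\lambda(1-g_0)\rho_1$ rank one and negative and $C=\lambda g_0\rho_0+(1-\lambda)(2g_0-1)\sigma$ positive semidefinite, applies Weyl's inequality twice (first to show that only $h_{d-1}$ can be negative, giving $\|H\|_1=\Tr[C]-r_{d-1}-2c_{d-1}$, then to bound $c_{d-1}\ge(1-\lambda)(2g_0-1)s_{d-1}$ with equality when $\rho_0$ is orthogonal to the minimal eigenvector of $\sigma$), and the sign dichotomy on $h_{d-1}$ produces the two branches of the max. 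You instead pass to the variational form $\tfrac12(1+\|H\|_1)=(1-g_0)+\max_\Pi\Tr[\Pi H]$ (lossless when restricted to projectors, since the optimum is the projector onto the positive part of $H$) and swap the two maximizations; that swap needs no further justification because iterated maxima always commute, so the ``main obstacle'' you flag at the end is not one --- you have already shown the bracket is at most $g_0$ and that $\ket{\phi_0}\in\operatorname{range}(\Pi)$, $\ket{\phi_1}\in\ker(\Pi)$ attains it for every $\Pi$ with $1\le\operatorname{rank}(\Pi)\le d-1$. Your route buys three things: the two branches of the max arise from the rank dichotomy on $\Pi$ (trivial ranks $0$ and $d$ versus intermediate ranks) rather than from an eigenvalue case analysis; the residual optimization is the elementary fact $\max_{\operatorname{rank}\Pi\le d-1}\Tr[\Pi\sigma]=1-s_{d-1}$; and Weyl's inequality is never invoked. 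Indeed, Lemma~\ref{lmm:orthogonal} is not even needed as an input, since states supported on $\operatorname{range}(\Pi)$ and $\ker(\Pi)$ are automatically orthogonal, so orthogonality of the optimal encoding falls out as a by-product. Your final algebra checks: $(1-g_0)+\lambda g_0+(1-\lambda)(2g_0-1)(1-s_{d-1})$ indeed equals $\frac{1+\lambda+(1-\lambda)(1-2s_{d-1})(2g_0-1)}{2}$. The only standing assumption, which you state and the paper uses implicitly, is $g_0\ge\tfrac12$, needed so that the coefficient $(1-\lambda)(2g_0-1)$ of $\Tr[\Pi\sigma]$ is nonnegative and so that the rank-$d$ branch dominates the rank-$0$ one.
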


\begin{proof}
  Due  to Lemma~\ref{lmm:orthogonal}  w.l.o.g. we  take
  the  encoding   $\{  \rho_x   \}$  to  be   pure  and
  orthogonal.  Denote the  eigenvalues of $\sigma$ with
  $\{ s_0, s_1, ..., s_{d-1} \}$ where w.l.o.g. we take
  $s_0 \geq s_1 \geq ....  \geq s_{d-1}$.  The Helstrom
  matrix becomes
  \begin{align*}
    H & = g_0\ch{S}_\lambda(\rho_0) - (1-g_0) \ch{S}_\lambda(\rho_1) \\
      &   =  -\lambda (1-g_0)  \rho_1
        + \lambda  g_0 \rho_0 + (1-\lambda) (2g_0-1) \sigma.
  \end{align*}
  Let  us set  $R  :=  -\lambda (1-g_0)  \rho_1$  and $C  :=
  \lambda g_0 \rho_0 + (1-\lambda) (2g_0-1) \sigma$, so that
  $H = R + C$, where  the only non-null eigenvalue of $R$ is
  $r_{d-1} := -\lambda(1-g_0) \le  0$ and the eigenvalues of
  $C$ are $\{c_0, ...., c_{d-1}  \}$ where $c_j \geq c_{j+1}
  \geq 0$.
  
  Then  we   can  label  the  eigenvalues   of  $H$  as
  $\{h_{0}, ..., h_{d-1}\}$ where $h_{j} \geq h_{j+1}$,
  and applying Weyl's inequality immediately results in
  the system:
  \begin{align*}
    \begin{cases}
      r_{i}+ c_{d-1} \leq h_{i} \leq r_{i} + c_{0}, \\
      r_{d-1} + c_{i} \leq h_{i} \leq r_{0} + c_{i}.
    \end{cases}
  \end{align*}
  Since $r_i = 0$ for any $i < d-1$, this becomes
  \begin{align*}
    \begin{cases}
      c_{d-1} \leq h_{i} \leq c_{i} \ \forall i < d-1, \\
      r_{d-1}  + c_{d-1}  \leq h_{d-1}  \leq r_{d-1}  +
      c_{0}.
    \end{cases}
  \end{align*}
  
  If $h_{d-1}  \geq 0$, then $\|  H \|_1 = 2  g_0 - 1$,
  which   corresponds  to   the  strategy   of  trivial
  guessing.   However,   if  $h_{d-1}  \leq   0$,  then
  $\| H  \|_1 \leq \sum_{i=0}^{d-2} c_{i}  - (r_{d-1} +
  c_{d-1})$.   This  upper  bound can  be  obtained  by
  selecting $R$ to have  its single non-zero eigenvalue
  $r_{d-1}$ corresponding to  the eigenvector for which
  $C$ has the eigenvalue  $c_{d-1}$.  Making use of the
  identity
  $\sum_{i}^{d-2} c_{i} =  \Tr[C]-c_{d-1}= [\lambda g_0
  +  (1-\lambda)   (2g_0-1)  ]  -  c_{d-1}$,   one  has
  $\| H \|_1 = \Tr[C] -  r_{d-1} - 2 c_{d-1}$, and thus
  $\| H \|_1$ is maximized when $c_{d-1}$ is minimized.

  Let us  set $A  = \lambda g_0  \rho_0$ with  only non-null
  eigenvalue $a_0  \ge 0$  and $B =  (1-\lambda) (2g_0  - 1)
  \sigma$ with eigenvalues $\{ b_{0}, ..., b_{d-1} \}$ where
  $b_{j} \geq b_{j+1}$,  so that $C = A +  B$.  Then another
  application  of  Weyl's  inequality yields  $b_{d-1}  \leq
  c_{d-1}$, with equality saturated  if and only if $\rho_0$
  is orthogonal to the eigenvector of $\sigma$ corresponding
  to  eigenvalue $s_{d-1}$  (w.l.o.g. we  take here  $\sigma
  \neq \openone/d$, as the  case of the depolarizing channel
  has        already         been        discussed        in
  Prop.~\ref{thm:depolarizing}),   and   thus   $c_{d-1}   =
  (1-\lambda) (2g_0  - 1) s_{d-1}$.   Finally, $\| H  \|_1 =
  \lambda + (1-\lambda)(1- 2  s_{d-1}) (2g_0-1)$, from which
  the statement immediately follows.
\end{proof}

Further  results  can be  derived  for  the utility  of
group-covariant quantum channels.

\begin{dfn}[Covariant channel]
  A channel $\ch{C}: \lin{\hilb{H}} \to \lin{\hilb{K}}$
  is   $G$-covariant  if   group  $G$   admits  unitary
  representations   $U_k    \in   \lin{\hilb{H}}$   and
  $V_k \in \lin{\hilb{K}}$ such that
  \begin{align}
    \label{eq:covariance}
    \ch{C}(U_k \rho U_k^\dagger) = V_k \ch{C}(\rho) V_k^{\dagger}
  \end{align}
  for any $\rho \in \lin{\hilb{H}}$ and any $k\in G$.
\end{dfn}

\begin{lmm}
  \label{thm:covariance}
  For any $G$-covariant channel $\ch{C}$ and any binary game
  $g$, if  an encoding  $\{ \rho_x  \}$ attains  the utility
  $U(\ch{C}, g)$,  then also  any encodings $\{  \sigma_x :=
  U_k \rho_x  U_k^\dagger \}$, where  $k$ is any  element of
  $G$, attains the same utility $U(\ch{C}, g)$.
\end{lmm}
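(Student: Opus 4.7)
The plan is to use the formula for the binary utility in terms of the Helstrom matrix together with the fact that the Schatten 1-norm is unitarily invariant. By Lemma~\ref{lmm:binary}, we can reduce the problem to a binary-input-output discrimination game without loss of generality, since any encoding attaining the utility of the original binary game also attains the utility of the associated discrimination game (and vice versa). Thus I assume $g = \diag(g_0, 1-g_0)$ and use the Helstrom expression
\begin{align*}
    U(\ch{C}, g) = \max_{\{ \rho_x \}} \frac12 \left( 1 + \| H \|_1 \right), \quad H := g_0 \ch{C}(\rho_0) - (1-g_0) \ch{C}(\rho_1).
\end{align*}

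Next, I would take an optimal encoding $\{\rho_x\}$ and look at the transformed encoding $\{\sigma_x = U_k \rho_x U_k^\dagger\}$, computing the associated Helstrom matrix $H'$. Applying the covariance relation~\eqref{eq:covariance} term by term yields $\ch{C}(\sigma_x) = V_k \ch{C}(\rho_x) V_k^\dagger$, so
\begin{align*}
    H' = g_0 V_k \ch{C}(\rho_0) V_k^\dagger - (1-g_0) V_k \ch{C}(\rho_1) V_k^\dagger = V_k H V_k^\dagger.
\end{align*}

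The unitary invariance of the Schatten 1-norm then gives $\|H'\|_1 = \|V_k H V_k^\dagger\|_1 = \|H\|_1$, so the Helstrom expression evaluates to the same number for $\{\sigma_x\}$ as for $\{\rho_x\}$. Since $\{\rho_x\}$ was assumed to attain the maximum, $\{\sigma_x\}$ also attains it, which is the claim. The argument has no real obstacle: the only nontrivial ingredient is having set up the binary utility in terms of a single unitarily invariant norm of a linear combination of $\ch{C}(\rho_x)$, after which covariance converts encoding symmetries into conjugations by $V_k$ on the output side, and these are invisible to $\|\cdot\|_1$.
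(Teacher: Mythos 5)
Your proof is correct and follows essentially the same route as the paper's: apply the covariance relation to rewrite the Helstrom matrix of the transformed encoding as $V_k H V_k^\dagger$, then invoke unitary invariance of the trace norm. The only difference is that you explicitly invoke Lemma~\ref{lmm:binary} to justify reducing to a binary discrimination game before using the Helstrom expression, a reduction the paper leaves implicit.
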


\begin{proof}
  The statement follows by direct inspection, namely
  \begin{align*}
    & || g_0 \ch{C}(\sigma_0) - (1-g_0) \ch{C}(\sigma_1) ||_1 \\
    = & || g_0 \ch{C}(U_k \rho_0 U_k^\dagger) - (1-g_0) \ch{C}(U_k \rho_1 U_k^\dagger)
        ||_1 \\
    = & || g_0 V_k \ch{C}(\rho_0) V_k^\dagger - (1-g_0) V_k
        \ch{C}(\rho_1) V_k^\dagger ||_1, \\
    = & || V_k \left( g_0 \ch{C}(\rho_0) - (1-g_0)
        \ch{C}(\rho_1) \right) V_k^\dagger ||_1, \\
    = & || g_0 \ch{C}(\rho_0) - (1-g_0) \ch{C}(\rho_1) ||_1,
  \end{align*}
  where      the      second     equality      follows      from
  Eq.~\eqref{eq:covariance}, and the  fourth from the invariance
  of trace distance under unitary transformations.
\end{proof}

Then,  the utility  of  universally covariant  channels
follows. As an example, let  us consider the $1$ to $2$
optimal universally covariant quantum cloning channel~\cite{Wer98}.

\begin{dfn}[Universal optimal cloning]
  The action  of the universal optimal  cloning channel
  $\ch{N}:    \lin{\hilb{H}^{\otimes    N}}    \to
  \lin{\hilb{H}^{\otimes    M}}$     on    any    state
  $\rho \in \lin{\hilb{H}}$ is given by
  \begin{align*}
    \ch{N}(\rho)   :=   \frac{f(N)}{f(M)}  P_s(   \rho^{\otimes N}   \otimes
    \openone^{\otimes (M-N)}) P_s
  \end{align*}
  where $d  = \dim\hilb{H}$, $P_s$ is  the projector on
  the symmetric subspace of $\hilb{H}^{\otimes M}$, and
  $f(x) := \binom{d+x-1}{x}$.
\end{dfn}

\begin{prop}
  \label{thm:cloning}
  For  any  binary  discrimination  game  $g$,  the  utility
  $U(\ch{N}, g)$ of the $1  \to 2$ universal optimal cloning
  channel $\ch{N}: \lin{\hilb{H}} \to \lin{\hilb{H}^{\otimes
      2}}$ is given by
  \begin{align*}
    U(\ch{N}, g) = \frac{d+g_0}{d+1},
  \end{align*}
  and any orthonormal pure encoding is optimal.
\end{prop}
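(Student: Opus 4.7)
The plan is to combine Lemma~\ref{lmm:orthogonal}, Helstrom's formula, and the $U(d)$-covariance of $\ch{N}$ (via Lemma~\ref{thm:covariance}) to reduce the problem to a concrete eigenvalue calculation in the symmetric subspace of $\hilb{H}^{\otimes 2}$. First, Lemma~\ref{lmm:orthogonal} allows restricting to pure orthonormal encodings $\{\ket{\phi_0},\ket{\phi_1}\}$. Since $P_s$ commutes with $U\otimes U$ for every unitary $U$, the cloner obeys $\ch{N}(U\rho U^\dagger) = (U\otimes U)\,\ch{N}(\rho)\,(U\otimes U)^\dagger$, so it is $U(d)$-covariant; Lemma~\ref{thm:covariance} then yields that the Helstrom $1$-norm is independent of the orthonormal encoding chosen. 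This simultaneously establishes the ``any orthonormal pure encoding is optimal'' part of the statement and lets me fix $\ket{\phi_i}=\ket{i}$ for the computation.

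Next, I would expand $\ch{N}(\ket{i}\!\bra{i}) = \frac{2}{d+1} P_s(\ket{i}\!\bra{i}\otimes\openone) P_s$ in the orthonormal basis of the symmetric subspace consisting of $\{\ket{kk}\}_{k}$ together with $\{\ket{s_{jk}} := \tfrac{1}{\sqrt{2}}(\ket{jk}+\ket{kj})\}_{j<k}$. Using $P_s\ket{00}=\ket{00}$ and $P_s\ket{0k}=\tfrac{1}{\sqrt{2}}\ket{s_{0k}}$ for $k\neq 0$, one obtains
\begin{align*}
\ch{N}(\ket{0}\!\bra{0}) = \frac{2}{d+1}\ket{00}\!\bra{00} + \frac{1}{d+1}\sum_{k\neq 0}\ket{s_{0k}}\!\bra{s_{0k}},
\end{align*}
and the analogous expression for $\ket{1}\!\bra{1}$. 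Because the only basis vector common to the two supports is $\ket{s_{01}}=\ket{s_{10}}$, the Helstrom matrix $H:=g_0\ch{N}(\ket{0}\!\bra{0})-(1-g_0)\ch{N}(\ket{1}\!\bra{1})$ is diagonal in this symmetric basis.

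Taking $g_0\ge 1/2$ without loss of generality (otherwise swap the labels), summing the absolute values of the diagonal entries gives
\begin{align*}
\|H\|_1 = \frac{2g_0 + 2(1-g_0) + (2g_0-1) + (d-2)g_0 + (d-2)(1-g_0)}{d+1} = \frac{d-1+2g_0}{d+1},
\end{align*}
whence the Helstrom formula $U(\ch{N},g) = (1+\|H\|_1)/2$ stated in the text yields $U(\ch{N},g) = (d+g_0)/(d+1)$, as claimed.

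The only nontrivial step is the symmetric-subspace bookkeeping used to exhibit $\ch{N}(\ket{i}\!\bra{i})$ as a sum of rank-one terms with disjoint supports apart from $\ket{s_{01}}$; once $H$ is written down in diagonal form, the rest is a one-line summation. The covariance argument does the heavy lifting of eliminating the optimization, so I do not expect any real obstacle beyond carrying out this explicit basis computation correctly.
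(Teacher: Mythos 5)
Your proposal is correct and follows essentially the same route as the paper: Lemma~\ref{lmm:orthogonal} plus the $U(d)$-covariance of the cloner (Lemma~\ref{thm:covariance}) reduce the problem to the computational-basis encoding, and the explicit symmetric-subspace expansion of $\ch{N}(\ket{i}\!\bra{i})$ yields $\|H\|_1 = (d-1+2g_0)/(d+1)$, exactly as in the paper (which phrases the same computation in terms of the commutativity of the two output states and their unique common eigenvector $\ket{s_{01}}$).
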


\begin{proof}
  By                        Lemmas~\ref{lmm:orthogonal}
  and~\ref{thm:covariance}    any   orthonormal    pure
  encoding  is   optimal.   W.l.o.g.   let  us   fix  a
  computational          basis         and          set
  $\rho_x = \ket{x} \! \!  \bra{x}$ for $x = 0,1$. Then
  we have
  \begin{align*}
    P_s    =     \sum_{n,m}    \frac{    (\ket{n,m}+\ket{m,n})
      (\bra{n,m}+\bra{m,n}) } 4,
  \end{align*}
  and thus
  \begin{align*}
    \ch{N}(\ket{0} \! \! \bra{0})       =      \frac2{d+1}       \sum_i
    \frac{(\ket{0,i}+\ket{i,0})(\bra{0,i}+\bra{i,0})}4,
  \end{align*}
  and                  analogously                  for
  $\ch{N}(\ket{1} \!  \! \bra{1})$  upon replacing
  $\ket{0}$ with $\ket{1}$. Therefore we have
  \begin{align*}
    &  \ch{N}(\ket{0} \! \! \bra{0})  \ch{N}(\ket{1} \! \! \bra{1})   \\  =  &
                                                                                         \frac1{(d+1)^2}
                                                                                         \frac{(\ket{0,1}+\ket{1,0})(\bra{0,1}+\bra{1,0})}2  \\  =  &
                                                                                                                                                      \ch{N}(\ket{1} \! \! \bra{1}) \ch{N}(\ket{0} \! \! \bra{0}),
  \end{align*}
  and                                              thus
  $[\ch{N}(\ket{0} \! \! \bra{0}), \ch{N}(\ket{1} \! \!
  \bra{1})]          =          0$.           Therefore
  $\ch{N}(\ket{0}\!\!\bra{0})$                      and
  $\ch{N}(\ket{1}\!\!\bra{1})$ admit a  basis of common
  eigenvectors, namely
  \begin{align*}
    \ch{N}(\ket{x}\!\!\bra{x})      =      \sum_k      \lambda_{k|x}
    \ket{k}\!\!\bra{k}.
  \end{align*}
  The only  common eigenvector such that  $\lambda_{k|x} \neq 0$
  for           any           $x$          is           $
  (\ket{0,1}+\ket{1,0})(\bra{0,1}+\bra{1,0})/2$       and      its
  corresponding eigenvalue is $(d+1)^{-1}$.  Then the trace norm
  of the Helstrom matrix is given by
  \begin{align*}
    \|      g_0       \ch{N}(\ket{0}\!\!\bra{0})      -      (1-g_0)
    \ch{N}(\ket{1}\!\!\bra{1})\|_1 = \frac{d + 2g_0 - 1}{d+1},
  \end{align*}
  from which the statement immediately follows.
\end{proof}

Notice that the partial trace  of the $1 \to 2$ cloning
channel $\ch{N}$ is a depolarizing channel given by
\begin{align*}
  \Tr_2[\ch{N}(\rho)]  =  \frac1{2(d+1)}  \left(  (d+2)  \rho  +
  \Tr[\rho] \frac{\openone}d \right) = \ch{D}_\lambda(\rho),
\end{align*}
with  $\lambda =  (d+2)/[2(d+1)]$.  Its  utility is
given by Eq.~\eqref{eq:shifted}, i.e.
\begin{align*}
  U(\Tr_2[\ch{N}(\rho)],     g)    =     \max    \left(     g_0,
    \frac{(d-2)g_0+d+3}{2(d+1)} \right).
\end{align*}
This situation is depicted in Fig.~\ref{fig:cloning}.

\begin{figure}[h]
  \includegraphics[width=\columnwidth]{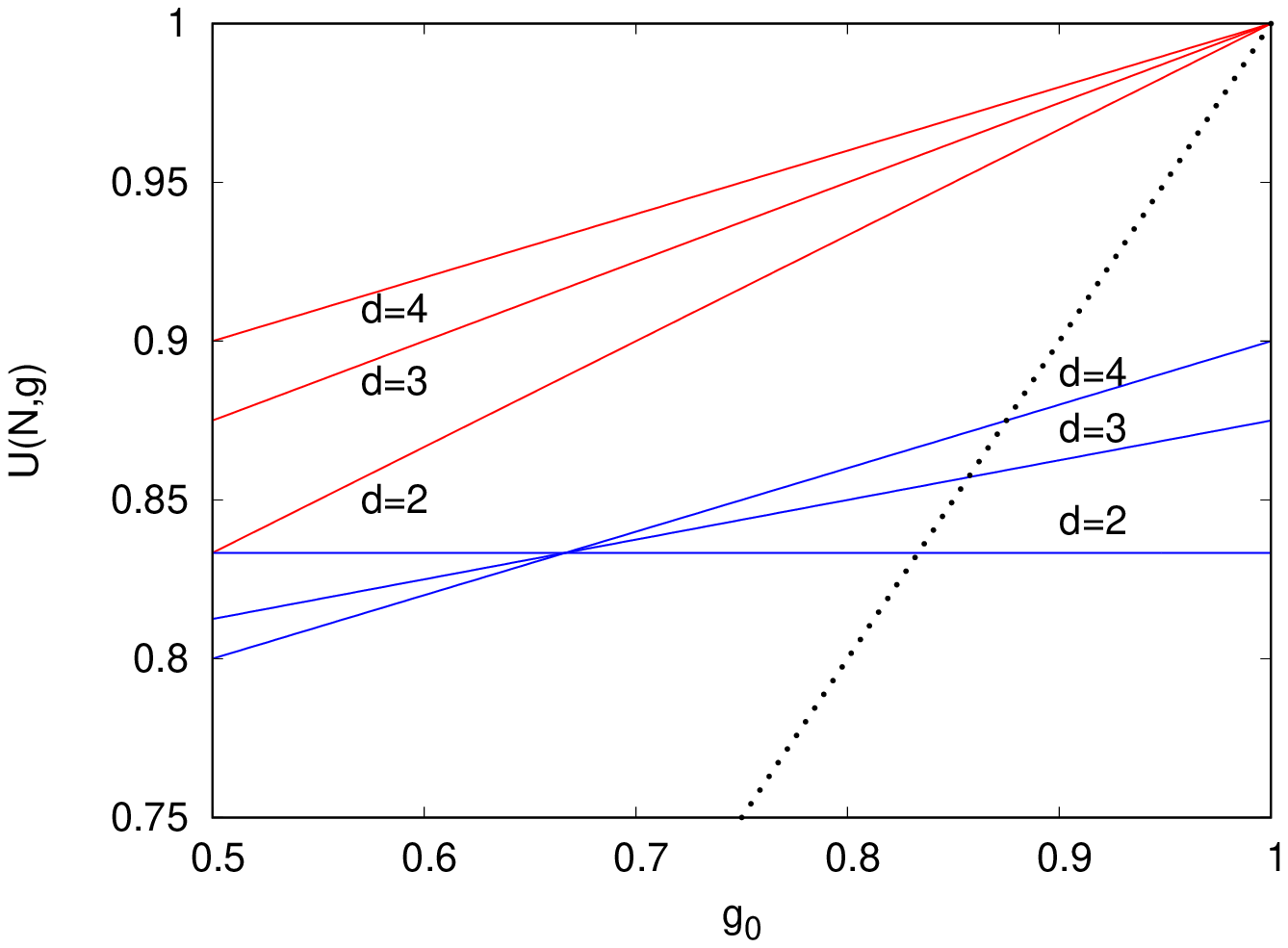}
  \caption{(Color  online)  Binary  utility  $U(\ch{N},  g)$
    (upper red lines)  and binary utility $U(\ch{D}_\lambda,
    g)$ (lower blue lines)  of the covariant cloning channel
    $\ch{N}$ and  depolarizing channel $\ch{D}_\lambda(\rho)
    := \Tr_2[\ch{N}(\rho)]$ with $\lambda = (d+2)/[2(d+1)]$,
    respectively,  as   a  function  of  $g_0$,   for  fixed
    dimension  $d =  2$, $3$,  and $4$.   The trivial  guess
    (dotted line)  is also  depicted.  Notice  that, perhaps
    surprisingly,  for   $d=2$  and  $g_0  =   \frac12$  the
    utilities  $U(\ch{N},  g)$  and  $U(\ch{D}_\lambda,  g)$
    coincide.  This can be  regarded as an entangled analogy
    of the readily verifiable  fact that, for any $\lambda$,
    the  success  probabilities  in  the  discrimination  of
    equiprobable                 qubit                states
    $\ch{D}_\lambda(\ket{0}\!\!\bra{0})$                 and
    $\ch{D}_\lambda(\ket{1}\!\!\bra{1})$  are the  same when
    one or two copies of the unknown state are available.  }
  \label{fig:cloning}
\end{figure}

\section{Conclusion and outlook}
\label{sec:conclusion}

In  this work  we  considered  quantum communication  games,
where  Alice's  task  is  to  communicate  some  information
received by a  referee to Bob through a  quantum channel, in
order to maximize a payoff that depends on both the received
information and  Bob's output.   The maximum  average payoff
defines the  communication utility  of the channel  for that
particular game. Hence, communication games act as witnesses
on the set of classical  noisy channels that can be obtained
from  the  given  quantum  channel,  and  the  corresponding
communication utility constitutes the  optimal value for any
such a witness. We  derived general results and closed-form,
analytic  solutions for  the utility  of several  classes of
quantum channels and several classes of games.

A natural extension of our  setup consists of allowing Alice
and  Bob to  share some  entangled state,  thus generalizing
superdense  coding~\cite{BW92}  to   cases  involving  noisy
communication  channels. In  such a  case, the  object being
considered  is  not  the  channel  $\ch{C}$  alone,  but  an
extension  $\ch{C}\otimes\id$.  For  the noiseless  channel,
clearly  the  entangled  assisted utility  of  the  identity
channel in dimension $d$ is equivalent to the utility of the
identity channel in dimension $d^2$.  Based on this example,
it is  clear that  in at least  some cases  entanglement can
increase the utility of quantum channels.

Our  results  shed new  light  on  the problem  of  creating
quantum  correlations.  It  is a  known result~\cite{HFZL12,
  GC13} that classically correlated  bipartite states can be
transformed by a local channel in a state exhibiting quantum
correlations,   if  and   only   if  the   channel  is   not
commutativity preserving.  However, not  much is known about
the  problem  of   characterizing  commutativity  preserving
channels in  the general  case where  the dimensions  of the
input and  output Hilbert  spaces differ.  Notice  that from
the proof of  Thm.~\ref{thm:cloning}, it immediately follows
that universal optimal $1$ to $2$ cloning is a commutativity
preserving channel.

Our   results    have   important   applications    in   the
quantification of non-markovianity  of quantum channels.  In
Theorem~(3)  of Ref.~\cite{WSB15},  it  was  shown that  the
supremum  of  non-Markovianity  over the  encoding  and  the
binary  discrimination game  is  attained  by an  orthogonal
encoding.  From the proof of our Lemma~\ref{lmm:orthogonal},
it immediately  follows that this  is actually the  case for
{\em any} binary discrimination game.

Finally,  our results  generalize  to the  quantum case  the
partial  ordering   among  classical  channels   derived  by
Shannon~\cite{Sha58}.   Therein, it  is  shown  that if  one
classical channel  $\ch{C}_0$ can  be reproduced  by another
$\ch{C}_1$  upon classical  (possibly  correlated) pre-  and
post-processes, then $\ch{C}_1$  has larger Shannon capacity
than  $\ch{C}_0$.  However,  the  validity  of the  converse
remains an open problem.

\end{document}